\newtheorem{theorem}{Theorem}[section]
\newtheorem{lemma}[theorem]{Lemma}
\newtheorem{proposition}[theorem]{Proposition}
\theoremstyle{definition}
\def\Z{\mathbb{Z}}
\def\cst{{\rm const}}
\renewcommand{\a}{\alpha}
\newcommand{\be}{\begin{equation}}
\newcommand{\ee}{\end{equation}}
\newcommand{\ba}{\begin{eqnarray}}
\newcommand{\ea}{\end{eqnarray}}
\def\square{\hbox{$\sqcap\kern-7pt\sqcup$}}
\def\P0{\Psi_{\{0\}}}
\def\cT{\mathcal{T}}
\def\cC{\mathcal{C}}
\def\cF{\mathcal{F}}
\def\cO{\mathcal{O}}
\def\cH{\mathcal{H}}
\def\cM{\mathcal{M}}
\newenvironment{formula}[1]{\begin{equation}\label{eq:#1}}
                       {\end{equation}\noindent}
\def\Fi#1{\begin{formula}{#1}}
\def\Ff{\end{formula}\noindent}
 \date{ }
\title{Weak dependence for a class of local functionals of Markov chains on $\Z^d$}
\author[1]{C. Boldrighini}
\affil[1]{\small{Dipartimento di Matematica G. Castelnuovo, Sapienza Universit\`{a} di Roma,
Piazzale Aldo Moro 5, 00185 Roma. \par GNFM, Istituto Nazionale di Alta Matematica, Piazzale Aldo Moro 5, 00185 Roma.}
\thanks{boldrigh@mat.uniroma1.it}}
\author[2]{A. Marchesiello}
\affil[2]{\small{Faculty of Nuclear Sciences and Physical Engineering,
Czech Technical University in Prague, D\v{e}\v{c}\'{\i}n Branch,
Pohranicn\'{\i} 1, 40501  D\v{e}\v{c}\'{\i}n}
  \thanks{anto.marchesiello@gmail.com}}
\author[3]{C. Saffirio}
\affil[3]{\small{Institut f\"{u}r Mathematik
Universit\"{a}t Z\"{u}rich,
Winterthurerstrasse 190, CH-8057 Z\"{u}rich
\thanks{ chiara.saffirio@math.uzh.ch}}}
\begin{document}

\maketitle
{\it Dedicated to the 90.th anniversary of Academician Yu. M. Berezansky}
\begin{abstract}
\noindent In many models of Mathematical Physics, based on the study of a Markov chain $\widehat \eta= \{\eta_{t}\}_{t=0}^{\infty}$ on $\Z^{d}$, one can prove by perturbative arguments a contraction property of the stochastic operator restricted to a subspace of local functions $\cH_{M}$ endowed with a suitable norm.
We show, on the example of a model of random walk in random environment with mutual interaction, that the condition is enough to prove a Central Limit Theorem for sequences  $\{f(S^{k}\widehat \eta)\}_{k=0}^{\infty}$, where $S$ is the time shift and  $f$ is  strictly local in space and belongs to a class of functionals related to the H\"older continuos functions on the torus $T^{1}$.

\end{abstract}

 \section {Introduction}
 \label {S1}

Many problems in Physics and other sciences lead to consider Markov chains on  the $d$-dimensional lattice $\Z^d$ with local interaction (see \cite{MaMi}).  The states of the chain are random fields   $\eta_t= \{\eta_t(x): x\in \Z^d\}$,  $ t\in \Z_+ = \{0, 1, \ldots\}$, with $\eta_t(x) \in S$, where $S$   is usually a finite or countable set. \par\smallskip
In many models, notably in the work of R.A. Minlos and collaborators (see, e.g. \cite{AMZ, BMP, BMP00, BMP2, KM, KMZh, MaMi} and references therein) one can prove, usually by perturbative arguments, the existence of an invariant measure $\Pi$ on the state space $\Omega = S^{\Z^d}$,  and of a subspace of local functions  $\cH_M\subset L_2(\Omega, \Pi)$, invariant with respect to  the  stochastic operator $\cT$  and  such that for all $F\in \cH_{M}$ with zero average   $\langle F \rangle_{\Pi} =0$,  we have, for some  constant  $\bar \mu\in (0,1)$,
 \be \label{uno} \left | (\cT F)(\xi)\right | \; \leq \;   \bar\mu \| F \|_{M}, \qquad \xi\in \Omega .\ee
Here  $\|\cdot \|_M$ is   a   suitable  norm  on $\cH_{M}$, which is as a rule identified with the help of an expansion in a natural basis.

    \par\smallskip

  If one considers sums of functionals depending on the  space-time field $\widehat \eta = \{\eta_{t}\}_{t=0}^{\infty} \in \widehat \Omega = S^{\Z^{d}\times \Z_{+}}$, $ \Z_{+} = \{0,1,2,\ldots\}$,  of the type $\sum_{t=0}^{T} f(S^{t}\widehat \eta)$, where $S$ is the time shift, $S\widehat \eta = \{\eta_{t}\}_{t=1}^{\infty}$ and $f$ is a functional which is local in space, one cannot in general obtain   a Central Limit Theorem (CLT) by relying on properties such as strong  mixing and the like \cite{Ibr71}, which need requirements that   may  not apply or may be difficult to prove  \cite{Davy}.   The aim of the present paper is to establish   properties which hold in the framework described above and are sufficient for the CLT to hold.  \par\smallskip
   The models to which our description above applies are of different nature, and  the space $\mathcal H_{M}$ is based  on explicit constructions, so that it is  convenient to  work on the example of a particular  model. The model that we consider here is  a random walk in dynamical random environment with mutual interaction introduced in the papers  \cite{BMP0, BMP}:   the Markov chain $\eta_t, t\in \Z_+$, describes the  ``environment from the point of view of the random walk'', an object which plays an important role in the analysis of random walks in random environment \cite{KV}.   \par\smallskip
   Our results are inspired by  a classical result on the CLT for functionals of independent variables by Ibragimov and Linnik \cite{Ibr71}  (Th. 19.3.1).\par\smallskip

  In the next section we describe the model, which is a perturbation of an independent model, and present the main features which are relevant to our analysis.
   In \S 3 we   prove some preliminary results and  in the final section  \S 4 we prove our main results.
  \par\smallskip

 \section {Description of the model}
 \label {S2}

We  consider a  version of the model studied in   \cite{BMP0,BMP},  which describes a discrete-time random walk $X_{t} \in \Z^d$, $d\geq1$,  $t\in \Z_+$, evolving in mutual  interaction with a random  field   $\xi_t=\{\xi_t(x)\,:\,x\in\Z^d\}$, with   $\xi_t(x)\in S = \,\{\pm 1\}$. The state space  is $\Omega = S^{\Z^{d}}$, and the space of the "trajectories" (or "histories") of the environment $\hat{\xi}=\{\xi_t \ :\ t\in\Z_+ \} $ is    $\widehat \Omega = S^{\Z^{d}\times \Z_{+}}$. Measurability is  understood  with respect to the $\sigma$-algebra generated by the cylinder sets.\par\smallskip
The pair $(X_{t}, \xi_{t}), t\in \Z_{+}$ is a conditionally independent Markov chain \cite{MaMi}, i.e.,  if  $A\subset \Omega$ is a measurable set,  we have
\begin{equation}\label{conditional} 
\begin{split}
P(X_{t+1}&=x+u, \xi_{t+1} \in A\ | \ X_t=x, \xi_t = \bar \xi) \\ 
&= P(X_{t+1}=x+u\ | \ X_t=x,  \xi_t = \bar \xi) \;  P( \xi_{t+1} \in A\ | \ X_t=x, \xi_t = \bar \xi)\,.
\end{split} 
 \end{equation}

 If    $\hat{\xi}\in \widehat \Omega $ is fixed,  the first factor on the right of \eqref{conditional} defines the "quenched" random walk, for which we assume    the   simple  form
\begin{equation}\label{probability-1}
P(X_{t+1}=x+u\ | \ X_t=x,  \xi_t = \bar \xi)=P_0(u)+ \epsilon  c(u) \bar \xi(x) , \;\;\; u\in\Z^d, \bar \xi \in \Omega.
\end{equation}
Here $\epsilon >0$ is a small parameter,   $P_0$ is a probability  distribution on $\Z^{d}$ and $c$  is a real function  on $\Z^{d}$, such that  $P_0(u) \pm \epsilon c(u) \in [0, 1)$,  $u\in \Z^d$.
 We also assume that $P_{0}$  is even  and $c$ odd in $u$,  and that both are finite range.  By homogeneity in space it is not restrictive to assume $X_0=0$.
 \par\smallskip
 For the random walk transition probability   $P_{0}$, with characteristic function $\tilde p_0(\lambda) = \sum_{u\in \Z^d} P_0(u) e^{i (\lambda, u)}$ we assume that it  is  non-degenerate, i.e.,  $|\tilde p_0(\lambda)|=1$ if and only if $\lambda=0$, and, in order to meet a technical assumption in \cite{BMP}, we also need that the Fourier coefficients of the function $1\over \tilde p_0(\lambda)$ are absolutely summable.      \par\smallskip
 The evolution of the environment  is independent at each site, so that  $P( \xi_{t+1} \in A\ | \ X_t=x, \xi_t = \bar \xi) $ is a sum of products of the factors
 \begin{equation}\label{conditionalenv}  P( \xi_{t+1}(y) = s | \ X_t=x, \xi_t = \bar \xi)  = (1-\delta_{x,y}) Q_0(\bar \xi(y), s) + \delta_{x,y} Q_1(\bar \xi(y), s) \end{equation}
 where $s=\pm 1$,  $Q_0, Q_{1}$ are symmetric $ 2 \times 2$ matrices, $Q_{0}$ has   eigenvalues $1, \mu$, $|\mu|\in (0,1)$, and $Q_1$ is such that  $Q_1-Q_0 =\cO(\epsilon)$.
 In words, at each site $x\in \Z^{d}$ the evolution is given by the  transition matrix $Q_0$,  except at the site  where the random walk is located,  where the transition matrix is $Q_1$.
  \par\smallskip

   A natural probability measure on the state space $\Omega $ is the product   $\Pi_0=\pi_0^{\Z^d}$, with $\pi_{0} = (1/2, 1/2)$. If $Q_0 = Q_1$ (no reaction on the environment) $\Pi_{0}$ is   invariant. \par \smallskip

  The model just described  was first considered in \cite{BMP} both for the annealed and quenched case.  If there is no reaction on environment (i.e., $Q_{0}= Q_{1}$)  the CLT  for the annealed and quenched asymptotics of the random walk was obtained in a  general setting  \cite{DKL}. A  non-perturbative result was obtained in \cite{DL}.

   \par\smallskip
The field $\eta_t(x)=\xi_t(X_t+x)$, $t\in \Z_{+}$ is the  ``environment from the point of view of the particle''. $\{\eta_t\,:\,t\in\Z_+\}$ is also a  Markov chain with state space $\Omega$, and it can be shown \cite{BMP2, DL} that it is equivalent to the full process $(X_t, \xi_t)$, i.e, for all $T \in \Z_{+}$, $T\geq 1$, given the  sequence $\eta_0,\dots,\eta_T$ one can reconstruct $(X_0,\xi_0),\dots,(X_T,\xi_T), $ almost-surely.

 \par\smallskip
 The stochastic operator $\cT$ on the Hilbert space $\mathcal H=L_2(\Omega; \Pi_0)$, is defined as
 \begin{equation} \label{eq:tr-matrix} (\mathcal T f)(\bar\eta)=\langle f(\eta_{t+1})|\eta_t=\bar\eta\rangle, \;\;\; f\in\mathcal H
\end{equation}
where the average $\langle\cdot\rangle$ is w.r.t. the transition probability \eqref{probability-1}.  By  our assumptions  $\mathcal T$ preserves parity under the exchange $\eta \to - \eta$.\par\smallskip
 In $\cH$ we introduce a convenient   basis.
As $Q_{0}$ is symmetric, its eigenvectors are   $e_0=(1,1)$ and $e_1=(1,-1)$ with corresponding eigenvalues $1$ and $\mu$. We denote their components as $e_{j}(s)$,  so  that $e_1(s)= s$,  $e_0(s)= 1$,  $  s= \pm 1$, and set \begin{equation}\label{besis}
\Phi_{\Gamma}(\bar \eta)=\prod_{x\in \Gamma} e_1(\bar \eta(x))=\prod_{x\in \Gamma} \bar \eta(x), \qquad \Gamma \in  \frak G, \end{equation}
where $\frak G$ is the collection of the finite subsets of $\Z^{d}$, with $\Phi_{\emptyset} =1$.
$\{\Phi_{\Gamma}: \Gamma \in \frak G\}$   is a discrete orthonormal complete basis in $\cH$, and for $f\in \cH$ we write    $f(\eta) =  \sum_{\Gamma\in \frak G} f_{\Gamma}\Phi_{\Gamma}$.
\par\smallskip
For $M>1$    the dense  subspace $\cH_{M}\subset \cH$ is defined as
 \be\label{eq:M-cond}
 \mathcal{H}_M=\{f=\sum_{\Gamma} f_{\Gamma}\Phi_{\Gamma}\ : \ \|f\|_M=\sum_{\Gamma}|f_{\Gamma}|M^{|\Gamma|}<\infty\}.
 \ee
 $\mathcal H_M$  equipped with the norm $\|\cdot\|_M$ is a Banach space. As  $|\Phi_{\Gamma}(\eta)| = 1$,  we have  \begin{equation}\label{normsup}\|f\|_{\mathcal{H}}\leq\| f\|_{\infty } \leq \|f\|_M, \qquad  f\in\mathcal H_M. \end{equation}
Moreover $\mathcal H_{M}$ is closed under multiplication. In fact, as it is to see,
 $$\Phi_{\Gamma} \Phi _{\Gamma^{\prime}} = \Phi_{\Gamma\triangle \Gamma^{\prime}}, \qquad  \Gamma\triangle \Gamma^{\prime} = \Gamma \setminus \Gamma^{\prime} \;\cup\; \Gamma^{\prime} \setminus \Gamma, $$
 so that if $f, g\in \mathcal H_{M}$ and  $f=\sum_{\Gamma} f_{\Gamma} \Phi_{\Gamma}$, $g=\sum_{\Gamma} g_{\Gamma} \Phi_{\Gamma}$, we have
 \begin{equation}\label{multiplication}  \|f g \|_{M} =  \sum_{\Gamma\Gamma^{\prime}} \left |  f_{\Gamma} g_{\Gamma^{\prime}} \right | M^{|\Gamma \triangle \Gamma^{\prime}|}  \leq \|f\|_{M} \|g\|_{M} .\end{equation}

 \par\smallskip

 In the paper \cite{BMP} an analysis of the  expression of the matrix elements of $\cT$ and its adjoint $\cT^*$,   relying on their spectral properties   for $\epsilon=0$, leads to
  the following results.\par
 \begin{theorem} \label{base} If $\epsilon$ and $|\mu|$ are small enough, the space $\cH_M$ is invariant under $\cT$, and there is an invariant probability measure $\Pi$ for the chain $\{\eta_t\}$ which is absolutely continuous with respect to  $\Pi_0$ with uniformly bounded density $v(\eta)$. Moreover $\cH_M$ can be decomposed as $$\mathcal{H}_M=\mathcal{H}^{(0)}_M+\widehat{\mathcal{H}}_M$$
where $\mathcal{H}^{(0)}_M$ is the space of the constants, and on  $\widehat{\mathcal H}_M$ the restriction of $\cT$ acts as a contraction:
\begin{equation}\label{mu-bar}
\|\mathcal{T} f\|_M\leq\bar{\mu} \|f\|_M, \ \ \ \qquad  f \in\widehat{\mathcal{H}}_M,
\end{equation}
 $ \bar{\mu}\in(0,1)$, $ \bar{\mu} = |\mu| + \cO(\epsilon)$. Furthermore if $f= f_0 + \widehat f, f_0\in \cH^{(0)}_M, \; \widehat f\in \widehat{\mathcal H}_M$,  then $$ f_0 = \int f(\eta) d\Pi(\eta)=\int f(\eta)v(\eta)d\Pi_0(\eta).$$
\end{theorem}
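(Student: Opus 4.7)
The strategy is perturbative: treat $\cT$ as a small perturbation of the operator $\cT_0$ obtained by setting $\epsilon=0$. At $\epsilon=0$ the walker moves by $P_0$ independently of the environment and every site of $\xi$ evolves under $Q_0$ (since $Q_1=Q_0$ at $\epsilon=0$). Conditioning on the walker step $u$ and using the spectral identity $E[\xi_{t+1}(y)\mid\xi_t(y)=s]=\mu s$ for $Q_0$ (whose $\mu$-eigenvector is $e_1(s)=s$), one computes
\[
\cT_0 \Phi_\Gamma \;=\; \mu^{|\Gamma|} \sum_u P_0(u)\, \Phi_{\Gamma+u},
\]
so $\|\cT_0\Phi_\Gamma\|_M = |\mu|^{|\Gamma|} M^{|\Gamma|}$. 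In particular $\cT_0 1=1$, and on $V:=\mathrm{span}\{\Phi_\Gamma:\Gamma\neq\emptyset\}$ the operator $\cT_0$ is a $\|\cdot\|_M$-contraction with rate $|\mu|$.

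Next I would estimate the perturbation $\cT-\cT_0$. Its contributions come from the term $\epsilon c(u)\bar\eta(0)$ in \eqref{probability-1} and from $(Q_1-Q_0)/\epsilon$ acting at the walker's site. Both carry an extra factor $\bar\eta(0)=\Phi_{\{0\}}$, which by the product rule $\Phi_\Gamma\Phi_{\{0\}}=\Phi_{\Gamma\triangle\{0\}}$ shifts $|\Gamma|$ by $\pm 1$. Using the finite range of $P_0$ and $c$, a direct estimate of the matrix elements in the basis $\{\Phi_\Gamma\}$ yields $\|\cT-\cT_0\|_{\cH_M\to\cH_M}\leq C\epsilon$ for some $C=C(M,c,Q_1-Q_0)$. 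Thus $\cT$ preserves $\cH_M$ and its spectrum lies within $C\epsilon$ of that of $\cT_0$.

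The decomposition is then produced by spectral projection. Since $\cT_0$ has $1$ as a simple, isolated eigenvalue (eigenvector $\Phi_\emptyset=1$) with the rest of its spectrum contained in $\{|z|\leq|\mu|\}$, Kato's analytic perturbation theory produces, for $\epsilon,|\mu|$ small enough, a rank-one spectral projection $P$ for $\cT$ with $P=P_0+\cO(\epsilon)$, where $P_0 f=f_\emptyset\cdot 1$. Since $\cT 1=1$, $P$ projects onto $\cH^{(0)}_M$, and $\widehat{\cH}_M:=\ker P$ is a $\cT$-invariant complement. For $f\in\widehat{\cH}_M$, writing $f=P_0 f+(I-P_0)f$ with $|P_0 f|=|(P_0-P)f|=\cO(\epsilon)\|f\|_M$ (because $Pf=0$), the contraction of $\cT_0$ on $V$ together with the perturbation bound give
\[
\|\cT f\|_M \leq (|\mu|+\cO(\epsilon))\|f\|_M,
\]
which is \eqref{mu-bar} with $\bar\mu=|\mu|+\cO(\epsilon)$. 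The scalar $\ell(f)$ defined by $Pf=\ell(f)\cdot 1$ satisfies $\ell(1)=1$ and $\ell\circ\cT=\ell$. Since $\cT^n f\to\ell(f)\cdot 1$ in $\|\cdot\|_M$ (hence uniformly by \eqref{normsup}) for every $f\in\cH_M$, and $\cT^n$ preserves non-negativity with sup-norm bounded by $\|f\|_\infty$, the inequalities $0\leq\ell(f)\leq\|f\|_\infty$ on non-negative $f\in\cH_M$ follow, and the polarization $\|f\|_\infty\cdot 1\pm f\geq 0$ extends them to $|\ell(f)|\leq\|f\|_\infty$. Extending $\ell$ by density of local polynomials to $C(\Omega)$ and invoking Riesz--Markov yields the invariant probability $\Pi$ with $\int f\,d\Pi=\ell(f)=f_0$.

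The main obstacle is the uniform bound on the density $v=d\Pi/d\Pi_0$. The contraction on $\cH_M$ only gives $|\ell(f)-\int f\,d\Pi_0|\leq C\epsilon\|f\|_M$, and the $\|\cdot\|_M$-norm is much stronger than $\|\cdot\|_\infty$, so this does not translate into a pointwise bound on $v$. The remedy, as implemented in \cite{BMP}, is to run an analogous perturbative analysis on the $L^2(\Pi_0)$-adjoint $\cT^\dagger$: because $\Pi_0$ is $\cT_0$-invariant one has $\cT_0^\dagger 1=1$, and one shows that $\cT^\dagger$ likewise preserves a weighted space modelled on $\cH_M$, where the perturbative fixed point $v=\cT^\dagger v$ satisfies $v-1=\cO(\epsilon)$. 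The majorization $\|\cdot\|_\infty\leq\|\cdot\|_M$ from \eqref{normsup} then yields $\|v\|_\infty\leq 1+\cO(\epsilon)$. Setting up the adjoint's matrix elements carefully in the basis $\{\Phi_\Gamma\}$, and ensuring uniqueness of the invariant density, are the technically delicate points.
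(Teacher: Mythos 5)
The paper does not actually prove Theorem \ref{base}: it is imported from \cite{BMP}, whose method it describes only as ``an analysis of the matrix elements of $\cT$ and its adjoint $\cT^*$, relying on their spectral properties for $\epsilon=0$.'' Your sketch is a correct reconstruction of exactly that perturbative route --- the identity $\cT_0\Phi_\Gamma=\mu^{|\Gamma|}\sum_uP_0(u)\Phi_{\Gamma+u}$ giving contraction rate $|\mu|$ on the non-constant modes, the $\cO(\epsilon)$ bound on $\cT-\cT_0$ in $\|\cdot\|_M$, the rank-one spectral projection yielding $\bar\mu=|\mu|+\cO(\epsilon)$, and the adjoint fixed-point equation for the density --- and you correctly single out the uniform bound on $v$ as the one step that genuinely requires the adjoint analysis carried out in \cite{BMP}.
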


\section{Preliminary estimates}\label{Walsh}

We denote by $\mathcal {P}_{\Pi}$ the probability measure on $\widehat \Omega =\{\pm 1\}^{\Z^{d}\times\Z_{+}}$ generated by the initial distribution $\Pi$, and  by $\frak {M}_{t_0}^{t_1}$, $0\leq t_{0} \leq t_{1}$   the $\sigma$-algebra of subsets of $\widehat \Omega$ generated by $\{\eta_{t}\}_{t=t_{0}}^{t_{1}}$. As $\Pi$ is invariant, $\mathcal {P}_{\Pi}$ is  invariant under the time shift.\par\smallskip
We consider functionals $f$ which depend only on
the values of the field at the origin, i.e.,  on  the sequence of random variables $\{\eta_{t}(0)\}_{t=0}^{\infty}$.
  We set  for brevity   $\zeta_t=\eta_t(0)$  and  $\widehat \zeta = \{ \zeta_t: t\in \Z_+\}\in \Omega_{+} = \{\pm 1\}^{\Z_+}$.   $\mathcal {M}_{t_0}^{t_1}$, $0\leq t_0 < t_1$ will denote the $\sigma$-algebra  generated by the variables $\{\eta_t(0)\}_{t=t_0}^{t_1}$, which is a subalgebra of  $\frak {M}_{t_0}^{t_1}$.\par\smallskip

  By
    abuse of notation,   $f(\widehat \zeta)$  may denote a function on $\widehat \Omega$ or on $\Omega_{+}$, according to the circumstances, and similarly for the $\sigma$-algebras $\mathcal {M}_{t_0}^{t_1}$, $0\leq t_0 < t_1$. We also write $\cM_{t}$ and $\frak M_{t}$ for $\cM_{t}^{t}$ and $\frak M_{t}^{t}$, respectively.

  \par\smallskip
  In what follows if $f$ is a function on $\widehat \Omega$ we introduce the notation $\langle f(\cdot)\vert \frak M_{0}\rangle(\eta) = G^{(f)}(\eta)$, $\eta \in \Omega$. The following  lemma is a  simple  consequence of Theorem \ref{base}.
  \begin{lemma}\label{lemma1} Let $f(\widehat \zeta)$ be a cylinder function on $\Omega_{+}$, depending only on the variables $\zeta_{0},\ldots, \zeta_{m-1}$, $m\geq 1$. Then $ G^{(f)}(\eta)  \in \cH_{M}$ and
  \begin{equation} \label{agg} \left \|  G^{(f)} \right \|_{M} \leq \; C \; \max_{\gamma \in \{0,\ldots, m-1\}} |f_{\gamma}| (1+\mu_{*})^{m},
 \end{equation}
    where $\mu_{*} = M \sqrt {\bar \mu (1+ 2 \bar \mu)}$ and  $C>0$ is a  constant.
     \end{lemma}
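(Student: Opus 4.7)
The plan is first to expand $f$ in the Walsh-type basis $\Phi_\gamma(\widehat\zeta)=\prod_{t\in\gamma}\zeta_t$, indexed by $\gamma\subseteq\{0,\ldots,m-1\}$, so that $f=\sum_\gamma f_\gamma\Phi_\gamma$ and, by linearity of the conditional expectation, $G^{(f)}=\sum_\gamma f_\gamma G_\gamma$ with $G_\gamma:=\langle\Phi_\gamma\vert\mathfrak M_0\rangle$. It then suffices to bound $\sum_\gamma\|G_\gamma\|_M\le C(1+\mu_*)^m$. For $\gamma=\{t_1<\cdots<t_k\}$, setting $d_j=t_{j+1}-t_j$, the tower property combined with the Markov property and time-homogeneity of $\{\eta_t\}$ yield the nested representation
\[
G_\gamma=\mathcal T^{t_1}\!\bigl(\Phi_{\{0\}}\cdot\mathcal T^{d_1}\!\bigl(\Phi_{\{0\}}\cdot\mathcal T^{d_2}\bigl(\cdots\Phi_{\{0\}}\cdot\mathcal T^{d_{k-1}}\Phi_{\{0\}}\bigr)\cdots\bigr)\bigr),
\]
which I encode inside-out by $\psi_k=\Phi_{\{0\}}$ and $\psi_j=\Phi_{\{0\}}\cdot\mathcal T^{d_j}\psi_{j+1}$, so that $G_\gamma=\mathcal T^{t_1}\psi_1$.

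The crucial observation is a parity argument. Since $\mathcal T$ commutes with the involution $\eta\mapsto-\eta$ and the reference measure $\Pi_0$ is parity-symmetric, the invariant measure $\Pi$ of Theorem~\ref{base} is parity-symmetric; hence every odd $g\in\mathcal H_M$ belongs to $\widehat{\mathcal H}_M$ and enjoys the sharp contraction $\|\mathcal T^n g\|_M\le\bar\mu^n\|g\|_M$. For a general $g=g_0+\hat g\in\mathcal H_M^{(0)}\oplus\widehat{\mathcal H}_M$ one only has $|g_0|\le\|g\|_\infty\le\|g\|_M$ and $\|\hat g\|_M\le 2\|g\|_M$, which yields the coarser bound $\|\mathcal T^n g\|_M\le(1+2\bar\mu^n)\|g\|_M$. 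Because $\Phi_{\{0\}}$ is odd and the product of two odd functions is even, the parities of the $\psi_j$ alternate along the recursion, starting from $\psi_k$ odd.

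I then pair consecutive recursion steps. When $\psi_{j+1}$ is odd, the two-step pass $\psi_{j+1}\to\psi_j\to\psi_{j-1}$ picks up a factor
\[
M\bar\mu^{d_j}\cdot M(1+2\bar\mu^{d_{j-1}})\le M^2\bar\mu(1+2\bar\mu)=\mu_*^2,
\]
using $d_i\ge 1$ and the multiplication estimate \eqref{multiplication}. A possible leftover single step, when $k-1$ is odd, contributes at most $M\bar\mu\le\mu_*$ (since $\bar\mu\le 1+2\bar\mu$). Hence $\|\psi_1\|_M\le M\mu_*^{k-1}$, and the outer $\mathcal T^{t_1}$ costs at worst a factor $1+2\bar\mu\le 3$, so $\|G_\gamma\|_M\le 3M\mu_*^{|\gamma|-1}$ for $|\gamma|\ge 1$ (with $\|G_\emptyset\|_M=1$). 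Summation over $\gamma\subseteq\{0,\ldots,m-1\}$ by the binomial theorem gives
\[
\sum_\gamma\|G_\gamma\|_M\le 1+\frac{3M}{\mu_*}\bigl((1+\mu_*)^m-1\bigr)\le C(1+\mu_*)^m,
\]
which, combined with $\|G^{(f)}\|_M\le\max_\gamma|f_\gamma|\sum_\gamma\|G_\gamma\|_M$, yields \eqref{agg}.

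The delicate point is producing the square root inside $\mu_*=M\sqrt{\bar\mu(1+2\bar\mu)}$: without exploiting parity each step would only give the factor $M(1+2\bar\mu^{d_j})$, and the resulting sum over subsets $\gamma\subseteq\{0,\ldots,m-1\}$ would diverge (since $\sum_{s\ge 1}(1+2\bar\mu^s)=\infty$). The alternating odd/even pattern of the $\psi_j$ is precisely what allows each odd step to be paired with the following even one at cost $\mu_*^2$, producing the desired exponential in $m$.
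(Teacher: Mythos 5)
Your proposal is correct and follows essentially the same route as the paper: expand $f$ in the Walsh basis, write $G_\gamma$ as an alternation of $\cT$-steps and multiplications by $\Phi_{\{0\}}$, and use the parity of the invariant measure to show the odd intermediate functions lie in $\widehat{\cH}_M$ and contract, which after pairing consecutive steps yields $\|G_\gamma\|_M \le C\mu_*^{|\gamma|}$ and the binomial sum $(1+\mu_*)^m$ — this is exactly the paper's proof of \eqref{lemmadue} in its Appendix. Your write-up is in fact more explicit than the paper's on the pairing step; the only quibble is the closing remark that the sum over $\gamma\subseteq\{0,\ldots,m-1\}$ would ``diverge'' without parity (it is a finite sum; what would fail is the base $1+\mu_*$, which is what the later lemmas actually need).
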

    \begin{proof} \label{Proof.}  As $f$ depends only on $\zeta_{0},\ldots, \zeta_{m-1}$ it can be written in the form
 \begin{equation}\label{expansion} f(\widehat \zeta) = \sum_{\gamma\subset \{0, \ldots, m-1\}} f_{\gamma} \Psi_{\gamma}(\widehat \zeta) \end{equation}   where the sum runs over the subsets of $\{0, \ldots, m-1\}$, and the functions
 \begin{equation}\label{basis2} \Psi_\gamma (\widehat \zeta)  = \prod_{t\in \gamma} \zeta_t , \quad \gamma \neq \emptyset, \qquad \Psi_\emptyset(\widehat \zeta)  =1 \end{equation}
   are called ``Walsh functions''. The first assertion   follows from the fact that for any    subset  $\gamma = \{t_{0}, t_{1}, \ldots, t_{k}\} \subset \Z_{+}$, $t_{0}< t_{1}< \ldots  < t_{k}$,
  we have
  \begin{equation} \label{condprob} G_\gamma(\bar \eta) := \left \langle \Psi_\gamma | \frak M_{t_0} \right \rangle \in \cH_{M}, \qquad \bar \eta \in \Omega.  \end{equation}
  In fact, if    $r_{j}= t_{k-1-j}-t_{k-j}$, $j=1, \ldots, k $,    $G_\gamma$ can be written as
\begin{equation}\label{A2} G_{\gamma}(\bar \eta)  =  \Phi_{\{0\}}(\bar \eta) \left [ \cT^{r_{k}} \Phi_{\{0\}}\ldots \cT^{r_{1}} \Phi_{\{0\}}\right ](\bar \eta), \quad \bar \eta \in \Omega,  \end{equation}
i.e., $G_\gamma$ is obtained by successive applications of $\cT$ and of the multiplication operator by $\Phi_{\{0\}}$.  As both operations  leave $\cH_M$ invariant,  $G_{\gamma}\in \mathcal H_{M}$.
\par\smallskip

 Moreover the following inequality is proved in the Appendix
  \begin{equation} \label{lemmadue} \ \|G_\gamma\|_M \;  \leq  \; M^{|\gamma|}\; \bar \mu^{ [{|\gamma|\over 2}]} (1+ 2 \bar \mu)^{[{|\gamma|-1\over 2}]} \leq C \; \mu_{*}^{|\gamma|},\end{equation}  where     $[\cdot]$ denotes the integer part, and $C>0$ is a  constant which is easily worked out. \par\smallskip

The proof of the lemma follows by observing that the inequality \eqref{lemmadue} implies

 \begin{equation} \label{agg1} \left \| \langle f(\cdot)|\frak M_{0}\rangle \right \|_{M} \leq \; C \; \max_{\gamma \in \{0,\ldots, m-1\}} |f_{\gamma}| \sum_{\gamma\subset \{0, \ldots, m-1\}} \mu_{*}^{|\gamma|}.
 \end{equation}
\end{proof}

  We denote by $\wp$ the probability measure induced  by $\mathcal {P}_{\Pi}$ on $\Omega_{+}$. $\wp$ is   stationary with respect to the time shift on  $\Omega_+$:  $ S\widehat \zeta = \{\zeta_1, \zeta_2, \ldots \}$.\par\smallskip

  The following assertion is a simple consequence of the previous lemma.
\begin{lemma}\label{lemma1b} Under the assumptions of the previous lemma, if  $\bar \mu$ is so small that  $\mu_{*}   <1$,  then  the probability measure $\wp$  on $\Omega_+$  is continuous.
  \end{lemma}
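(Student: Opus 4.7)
The plan is to show that $\wp$ is non-atomic by showing that every singleton $\{\widehat\zeta^*\}\subset \Omega_+$ has $\wp$-measure zero. For a fixed sequence $\widehat\zeta^* = \{\zeta^*_t\}_{t\ge 0} \in \Omega_+$, set
\[
A_m = \bigl\{\widehat\zeta\in\Omega_+ : \zeta_t = \zeta^*_t \text{ for } t=0,\ldots,m-1\bigr\},
\]
so that $\{\widehat\zeta^*\} = \bigcap_{m\ge 1} A_m$ and, by continuity of $\wp$ along decreasing intersections, $\wp(\{\widehat\zeta^*\}) = \lim_{m\to\infty}\wp(A_m)$. It therefore suffices to show $\wp(A_m)\to 0$ at a geometric rate, and I would do this by feeding the indicator $\chi_{A_m}$ into Lemma \ref{lemma1}.

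The key observation is that $\chi_{A_m}$ is a cylinder function of $\zeta_0,\dots,\zeta_{m-1}$ with a particularly transparent Walsh expansion:
\[
\chi_{A_m}(\widehat\zeta) \;=\; \prod_{t=0}^{m-1}\frac{1+\zeta^*_t\,\zeta_t}{2} \;=\; \frac{1}{2^m}\sum_{\gamma\subset\{0,\ldots,m-1\}}\Bigl(\prod_{t\in\gamma}\zeta^*_t\Bigr)\,\Psi_\gamma(\widehat\zeta),
\]
so that every Walsh coefficient $f_\gamma$ has $|f_\gamma|=2^{-m}$. Plugging this into the bound \eqref{agg} of Lemma \ref{lemma1} yields
\[
\bigl\| G^{(\chi_{A_m})}\bigr\|_M \;\le\; C\,\frac{(1+\mu_*)^m}{2^m} \;=\; C\Bigl(\frac{1+\mu_*}{2}\Bigr)^{\!m}.
\]

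To convert this into a bound on $\wp(A_m)$, I would use the tower property: since $\mathcal{P}_\Pi$ is stationary and the variables $\eta_t(0)$ are recovered from $\eta_t$,
\[
\wp(A_m) \;=\; \int \chi_{A_m}(\widehat\zeta)\,d\mathcal{P}_\Pi \;=\; \int G^{(\chi_{A_m})}(\eta)\,d\Pi(\eta) \;\le\; \bigl\|G^{(\chi_{A_m})}\bigr\|_\infty,
\]
and then apply the sandwich \eqref{normsup} to conclude $\wp(A_m)\le \|G^{(\chi_{A_m})}\|_M$. Under the hypothesis $\mu_* < 1$ we have $(1+\mu_*)/2 < 1$, so $\wp(A_m)\to 0$ geometrically, giving $\wp(\{\widehat\zeta^*\})=0$ for every $\widehat\zeta^*$.

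The only subtle point is verifying that $G^{(\chi_{A_m})}$, viewed as a conditional expectation under $\mathcal{P}_\Pi$ with respect to $\frak M_0$, really is the object whose $M$-norm is controlled by Lemma \ref{lemma1} — i.e., that the lemma, stated for the cylinder function $f$ on $\Omega_+$, applies to the indicator of $A_m$. This is fine because $\chi_{A_m}$ is a polynomial in $\zeta_0,\dots,\zeta_{m-1}$ and hence lies in the class covered by the lemma, with the Walsh coefficients displayed above. Beyond that, the proof is a direct combination of the Walsh expansion with the contraction estimate already packaged in Lemma \ref{lemma1}.
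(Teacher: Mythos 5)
Your proof is correct and follows essentially the same route as the paper's: both expand the cylinder indicator $\prod_{t}(1+\zeta^*_t\zeta_t)/2^m$ in Walsh functions with coefficients of modulus $2^{-m}$, bound the conditional expectations via the $\mu_*^{|\gamma|}$ estimate, and obtain the geometric decay $((1+\mu_*)/2)^m$. The only cosmetic difference is that you invoke the packaged bound \eqref{agg} of Lemma \ref{lemma1} while the paper redoes the sum $\sum_\gamma \mu_*^{|\gamma|}$ directly from \eqref{lemmadue}.
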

\begin{proof} \label{Proof.} We need to prove that any point $\widehat \zeta^{(0)} =\{\bar \zeta_{k}\}_{k=0}^{\infty}\in \Omega_{+}$ has zero $\wp$-measure. Consider the cylinders  $Z_n(\widehat \zeta^{(0)}) = \{\zeta_{j} = \bar \zeta_{j}:\; j=0,1,\ldots, n-1\}$,  which are decreasing $Z_{n+1}(\widehat \zeta^{(0)})\subset Z_n(\widehat \zeta^{(0)})$ and   such that $\cap_{n} Z_{n}(\widehat \zeta^{(0)}) =\{\widehat \zeta^{(0)}\}$. The probabilities   \begin{equation}\label{probz}  \wp \left ( Z_n(\widehat \zeta^{(0)}) \right ) = {1\over 2^n} \left \langle \prod_{j=0}^{n-1} \left ( 1 + \bar \zeta_j \zeta_j \right ) \right \rangle_\wp \end{equation}
are computed by expanding the internal product   in terms of the functions $\Psi_\gamma$:
$$ \prod_{j=0}^{n-1} \left ( 1 + \bar \zeta_j \zeta_j \right )  =   \sum_{\gamma \subset \{0, \ldots, n-1\}} \Psi_{\gamma}(\widehat {\bar \zeta} ) \Psi_{\gamma}(\widehat \zeta), \qquad   \widehat {\bar \zeta}= \{\bar \zeta_{j}\}_{j=0}^{n-1}.$$
Recalling that $|\Psi_{\gamma}(\widehat \zeta)|=1$,   we have
$$\left | \left \langle  \sum_{\gamma \subset \{0, \ldots, n-1\}} \Psi_{\gamma}(\widehat {\bar \zeta} ) \Psi_{\gamma}(\widehat \zeta) \right \rangle_{\wp}\right |\leq  \sum_{\gamma \subset \{0, \ldots, n-1\}} \left |\left \langle  \Psi_{\gamma}(\widehat \zeta) \right \rangle_{\wp} \right | =$$

$$ =  \sum_{\gamma \subset \{0, \ldots, n-1\}}  \left |  \left \langle  \left \langle \Psi_{\gamma}|\frak M_{0}\right \rangle(\cdot) \right \rangle_{\Pi}\right | =  \sum_{\gamma \subset \{0, \ldots, n-1\}} \left |  \left \langle G_{\gamma}(\cdot)\right \rangle_{\Pi} \right | . $$
Therefore by the inequality \eqref{lemmadue}  the right side  is bounded by
$${C\over 2^{n}}  \sum_{\gamma \subset \{0, \ldots, n-1\}}  \mu_{*}^{|\gamma|} =  C \left ({1+\mu_{*}\over 2 }\right )^{n}.$$
Hence if $\mu_{*} <1$, the right side tends to $0$ as $n\to\infty$, which proves the lemma.
\end{proof}
From now on we assume that $\mu_{*}<1$.
\par\smallskip

  We pass to consider functions for which the expansion \eqref{expansion} is infinite, i.e., $\gamma$ runs over the collection
   $\frak g$  of the finite subsets $\Z_{+}$.  The functions $\{\Psi_{\gamma}: \gamma \in \frak g\}$,  are an orthonormal   basis in  $L_2(\Omega_+, \wp_0)$, where $\wp_0 = \pi^{\Z_+} $ is the probability measure on $\Omega_+$ corresponding to the random variables $\{\zeta_k\}_{{k=0}}^{\infty}$ being i.i.d. with distribution $\pi(\pm 1) ={1\over 2}$. The corresponding series is called ``Fourier-Walsh expansion'' \cite{Fine}. \par
 \smallskip
  A map  $\cF: \Omega_+\to T^1$, where $T^1 =[0, 1) \mod 1$  is the one-dimensional torus, is defined by associating to a point $\widehat \zeta\in \Omega_+$ the binary expansion $x =0,a_0 a_1\ldots\in [0, 1]$,  with $a_t = {1-\zeta_t\over 2} $, $t\in \Z_+$.
 $\cF$ is not invertible because the dyadic points of $T^{1}$ have two binary expansions, but it becomes invertible if we exclude the sequences such that $\zeta_{t}= -1$ for all $t$ large enough.  Such sequences are a countable set, which has zero $\wp_0$-measure, and also, by Lemma \ref{lemma1b},  zero $\wp$-measure.
 \par\smallskip
  Under the map $\cF$ the  basis functions  $\Psi_{\gamma}$ go into the functions $$\psi_{\gamma}(x) = \prod_{t\in \gamma} \phi_{t}(x), \qquad \gamma \in \frak g .$$
where $\phi_{t}(x)$  is the image of    $\zeta_{t}$,  $t\in \Z_{+}$, i.e.,    $$\phi_{0}(x)  =\left\{
\begin{array}{ll}
1, & 0\leq x < {1\over 2}\\
-1, & {1\over 2}\leq x < 1
\end{array}\right. $$
and   for $t\geq 0$,    $\phi_{t}(x) = \phi_{0}(2^{t} x)$, where $2^{t} x$ is understood $\mod 1$.  \par\smallskip

If $f\in L_2(\Omega_{+}, \wp_{0})$ then $\tilde f(x) = f (\cF^{-1} x) \in L^{2}(T^1, dx)$ and can be expanded in the orthonormal basis $\{\psi_\gamma: \gamma \in \frak g\}$, with coefficients
\begin{equation} \label{coeff} f_{\gamma} = \int_{\Omega_{+}} f(\widehat \zeta) d\wp_{0}(\widehat \zeta) = \int_{0}^{1} \tilde f(x) \psi_{\gamma}(x) dx .\end{equation}
\par\smallskip
  A natural way of ordering the collection $\frak g$ of the finite subsets of $\Z_{+}$, which plays an important role in the theory, is obtained by setting  $\gamma_0=\emptyset$ and $\gamma_n = \{ t_1, t_2, \ldots, t_r\}$, where  $r$ and $0 \leq t_1< t_2 < \ldots < t_r$ are uniquely defined by the relation $ n = 2^{t_1} + \ldots + 2^{t_r} $.
We  call Walsh series both the expansion
\begin{equation} \label{f1}  f(\widehat \zeta) =\sum_{\gamma \in \frak g} f_{\gamma} \Psi_{\gamma}(\widehat \zeta) =  \sum_{n=0}^\infty f_{\gamma_n} \Psi_{\gamma_n}(\widehat \zeta),  \end{equation}and the corresponding expansions of $\tilde f(x)$.
  For the latter, an important  role is played by a particular set of partial sums
\begin{equation} \label{ridotta}  \Sigma_{2^{k}}(\tilde f; x) = \sum_{\gamma \subset \{0, 1, \ldots, k-1\}} f_{\gamma} \psi_{\gamma}(x) = \sum_{n=0}^{2^{k}-1} f_{\gamma_{n}} \psi_{\gamma_{n}}(x)  \end{equation}
for which it can be seen \cite{Fine} that
\begin{equation}\label{ridotta}   \Sigma_{2^{k}}(\tilde f; x) = 2^{k} \int_{\alpha_{k}}^{\beta_{k}} \tilde f(y) dy, \qquad  \alpha_k = m\;  2^{-k}, \;  \beta_k = (m+1) 2^{-k} \end{equation}
where the integer $m$ is such that   $\alpha_{k} \leq x < \beta_{k}$.

\par\smallskip
   The following result is proved in \cite{Fine}. We repeat it here, with a shorter proof based on conditional probabilities.\par

\begin{lemma}\label{lemmaf} Let $\tilde f(x)$ be a bounded function. Then  its Walsh-Fourier coefficients   $f_{\gamma}$,  given by \eqref{coeff}, satisfy the following inequality
\begin{equation}\label{stimacoeff}  \left | f_{\gamma} \right |  \; \leq \;  {\omega(\tilde f; 2^{-n-1})\over 2^{n+2}}, \qquad n = max\{ t: t\in \gamma \}, \end{equation}
where     $\omega(f;\delta)$ is the modulus of continuity of $\tilde f$:
\begin{equation}\label{coeffcont} \omega(f;\delta) = \sup_{x, x\prime \in T^1\atop |x-x^{\prime}|=\delta} {|f(x) - f(x^{\prime})|\over \delta}. \end{equation} \end{lemma}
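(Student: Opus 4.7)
The plan is to exploit the elementary fact that, when $n=\max\gamma$, the Walsh function $\psi_\gamma$ is constant on dyadic intervals of length $2^{-n-1}$ and, crucially, has opposite signs on the two halves of any dyadic interval of length $2^{-n}$, so that $\int_I\psi_\gamma=0$ on every such interval. This orthogonality against constants lets us convert $\int_I \tilde f \psi_\gamma$ into a difference of integrals of $\tilde f$ over shifted intervals of length $2^{-n-1}$, which is directly controlled by $\omega(\tilde f;2^{-n-1})$.

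First I would verify the structural claim. Since $\phi_t(x)=\phi_0(2^t x)$ depends only on the first $t+1$ binary digits of $x$, each $\phi_t$ with $t\le n$ is constant on every dyadic interval of length $2^{-n-1}$, so the same is true of $\psi_\gamma$. Moreover, because $n\in\gamma$, the factor $\phi_n$ takes the value $+1$ on the left half and $-1$ on the right half of any dyadic interval $I_k=[k\,2^{-n},(k+1)\,2^{-n})$, while the other $\phi_t$ with $t\in\gamma$, $t<n$, are constant on all of $I_k$. Hence $\psi_\gamma|_{I_k^-}=-\psi_\gamma|_{I_k^+}=\epsilon_k\in\{\pm1\}$ where $I_k^\pm$ are the two halves of $I_k$.

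Next I would partition $[0,1)$ into the $2^n$ intervals $I_k$ and write, after the change of variable $y\mapsto y+2^{-n-1}$ in the second piece,
\begin{equation*}
\int_{I_k}\tilde f(x)\psi_\gamma(x)\,dx \;=\; \epsilon_k\!\int_{I_k^-}\!\bigl[\tilde f(y)-\tilde f(y+2^{-n-1})\bigr]\,dy.
\end{equation*}
By the definition \eqref{coeffcont} of $\omega$, the integrand is bounded in absolute value by $2^{-n-1}\,\omega(\tilde f;2^{-n-1})$, so $|\int_{I_k}\tilde f\psi_\gamma|\le 2^{-2n-2}\,\omega(\tilde f;2^{-n-1})$. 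Summing over the $2^n$ intervals yields
\begin{equation*}
|f_\gamma| \;\le\; 2^n\cdot 2^{-2n-2}\,\omega(\tilde f;2^{-n-1}) \;=\; \frac{\omega(\tilde f;2^{-n-1})}{2^{n+2}},
\end{equation*}
which is the claimed estimate \eqref{stimacoeff}.

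I do not anticipate a real obstacle: the only point requiring care is the sign-flip / vanishing-mean property of $\psi_\gamma$ on the dyadic intervals of length $2^{-n}$, which is just bookkeeping on binary digits given the explicit form of $\phi_t$. Once that is established, the bound follows from a single change of variable and the definition of the modulus of continuity; boundedness of $\tilde f$ is needed only to guarantee that all the integrals make sense.
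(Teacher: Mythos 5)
Your proof is correct and is essentially the paper's argument in different clothing: partitioning $[0,1)$ into the $2^n$ dyadic intervals of length $2^{-n}$ and using that $\psi_\gamma$ is $\pm1$ with a sign flip across the two halves is exactly the paper's step of conditioning on $\cM_0^{n-1}$ and reducing to $2^n\int_{x_n}^{x_n+2^{-n-1}}[\tilde f(x)-\tilde f(x+2^{-n-1})]\,dx$, after which both arguments invoke the (normalized) modulus of continuity to get the factor $2^{-n-1}\,\omega(\tilde f;2^{-n-1})$ and conclude the bound $\omega(\tilde f;2^{-n-1})/2^{n+2}$. No gap.
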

\par\noindent
 \begin{proof}[Proof.]  We have
$$ f_\gamma = \left \langle f(\widehat \zeta) \prod_{t\in \gamma} \zeta_t \right \rangle_{\wp_0}  = \left \langle \prod_{t\in \gamma\setminus \{n\}} \zeta_t  \; \; \left \langle f(\widehat \zeta) \zeta_n \vert \cM_{0}^{n-1}\right  \rangle \right \rangle_{\wp_0} .$$
Going back to $T^{1}$, and setting $x_{n} = {a_{0}\over 2}+ \ldots + {a_{n-1}\over 2^{n}}$, $a_{j}={1-\zeta_{j}\over 2}$, we have
$$\left |  \left \langle f(\widehat \zeta) \zeta_n \vert \cM_{0}^{n}\right  \rangle \right |  = 2^{n} \int_{x_{n}}^{x_{n}+ 2^{-n}} \tilde f(x) \left ( 1 - 2\phi_{n}(x) \right ) dx= $$
 \begin{equation}\label{decay} =  2^{n} \int_{x_{n}}^{x_{n}+ 2^{-n-1}} \left [ \tilde f(x) - \tilde f(x + 2^{-n-1}) \right ] dx , \end{equation}
from which,  taking into account \eqref{coeffcont}, the inequality \eqref{stimacoeff} follows immediately. \end{proof}
\par\smallskip

The results above allow us to prove the analogue of Lemma \ref{lemma1} for functions $f$ such that  $\tilde f (x) = f(\cF^{-1}x)$ is H\"older continuous: $\tilde f \in \cC^{\alpha}(T^{1})$,  $\alpha\in (0,1)$.  In what follows if $g\in \cC^\alpha(T^1)$ we denote by $\|g\|_{\cC^{\alpha}}$ the norm and by  $\|g \|_\alpha$ the semi--norm
\begin{equation*}
\|g\|_{\alpha}=\sup_{x,y\in T^1}\frac{|g(x)-g(y)|}{|x-y|^\alpha}.
\end{equation*}

\begin{lemma}\label{lemma:hoelder}
Let  $f$ be a function on $\Omega_+$,   such that   $\tilde f\in \mathcal C^\alpha(T^1)$,  $\alpha \in (0, 1)$. If $\bar \mu$ is so small that  $\kappa := 2^{-\alpha}(1+\mu_*) < 1$, then
  $G^{(f)} \in \cH_M$ and  the  following inequality holds
   \begin{equation}\label{norma}\left \| G^{(f)} \right  \|_M \leq {C_{\a}\over 1- \kappa}  \|\tilde f\|_{\cC^{\alpha}} ,  \end{equation}    where $C_{\a}>0$  is a positive constant.
\end{lemma}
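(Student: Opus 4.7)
The plan is to expand $f$ in the Walsh-Fourier basis and transfer the Hölder regularity of $\tilde f$ into a decay estimate on the coefficients, then combine it with the exponential bound \eqref{lemmadue} on $\|G_\gamma\|_M$ from Lemma \ref{lemma1} to control $G^{(f)}$ in $\cH_M$ term by term. Concretely, write
\begin{equation*}
f(\widehat\zeta) = \sum_{\gamma \in \frak g} f_\gamma \Psi_\gamma(\widehat\zeta),
\qquad
G^{(f)}(\bar\eta) = \sum_{\gamma \in \frak g} f_\gamma\, G_\gamma(\bar\eta),
\end{equation*}
where the second identity has to be justified as a convergent series in $\cH_M$ (the identification with the conditional expectation then follows by a standard truncation / density argument, since partial sums in $\cH_M$ converge in $L^2(\Omega,\Pi_0)$ by \eqref{normsup}, and the conditional expectation is continuous in $L^2$).

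The main estimate is on the single coefficients. From Lemma \ref{lemmaf}, for $\gamma \neq \emptyset$ with $n = \max\gamma$ one has
\begin{equation*}
|f_\gamma| \;\leq\; \frac{\omega(\tilde f; 2^{-n-1})}{2^{n+2}}
\;\leq\; C_\alpha\, \|\tilde f\|_\alpha\, 2^{-n\alpha},
\end{equation*}
since $\omega(\tilde f;\delta) \leq \|\tilde f\|_\alpha\, \delta^{\alpha-1}$ with the paper's normalization of $\omega$. Combined with $\|G_\gamma\|_M \leq C\,\mu_*^{|\gamma|}$ from \eqref{lemmadue}, the triangle inequality gives
\begin{equation*}
\|G^{(f)}\|_M \;\leq\; |f_\emptyset|\,\|G_\emptyset\|_M + C\,C_\alpha\,\|\tilde f\|_\alpha \sum_{n=0}^{\infty} 2^{-n\alpha} \!\!\sum_{\gamma:\,\max\gamma = n}\! \mu_*^{|\gamma|}.
\end{equation*}

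The key combinatorial observation is that every $\gamma$ with $\max\gamma = n$ is of the form $\{n\} \cup \gamma'$ with $\gamma' \subset \{0,\dots,n-1\}$, hence
\begin{equation*}
\sum_{\gamma:\,\max\gamma = n} \mu_*^{|\gamma|} = \mu_*\,(1+\mu_*)^n,
\end{equation*}
so the double sum becomes the geometric series $\sum_n \kappa^n = (1-\kappa)^{-1}$, which converges exactly under the hypothesis $\kappa = 2^{-\alpha}(1+\mu_*) < 1$. The $\emptyset$-term contributes $|f_\emptyset|\leq \|\tilde f\|_\infty \leq \|\tilde f\|_{\cC^\alpha}$, and putting everything together yields \eqref{norma} after absorbing constants into $C_\alpha$.

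The only real obstacle is the bookkeeping around the normalization of the modulus of continuity (factors $2^{-n\alpha}$ vs.\ $2^{-(n+1)(\alpha-1)}$ in Lemma \ref{lemmaf}) and the technical point that the Walsh expansion of $f$ converges a priori only in $L^2(\Omega_+,\wp_0)$, whereas \eqref{norma} is a bound in $\cH_M$; this is resolved by noting that absolute convergence of $\sum_\gamma f_\gamma G_\gamma$ in the Banach space $(\cH_M,\|\cdot\|_M)$, which the above estimate establishes, together with $L^2$-continuity of the conditional expectation, forces the sum to coincide $\Pi$-a.e.\ with $G^{(f)}$.
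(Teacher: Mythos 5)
Your proof is correct and follows essentially the same route as the paper: bound the Walsh--Fourier coefficients via Lemma \ref{lemmaf} and the H\"older semi-norm to get the factor $2^{-n\alpha}$ with $n=\max\gamma$, bound $\|G_\gamma\|_M$ by $C\mu_*^{|\gamma|}$ via \eqref{lemmadue}, and sum over $\gamma$ grouped by $\max\gamma$ to obtain the geometric series in $\kappa=2^{-\alpha}(1+\mu_*)$ (the paper organizes this grouping through the dyadic blocks $2^k\le n<2^{k+1}$ of the standard enumeration $\gamma_n$, which is the same thing). Your explicit attention to why the absolutely convergent series $\sum_\gamma f_\gamma G_\gamma$ in $\cH_M$ actually coincides with the conditional expectation $G^{(f)}$ is a point the paper leaves implicit, and is a welcome addition.
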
 \par\smallskip
\begin{proof}\label {Proof.}  If $2^k \leq n < 2^{k+1}$ the Fourier coefficient $\gamma_{n}$ in the Walsh series \eqref{f1} is such that $\max \{t\in \gamma_n\} = k$. Hence, as   $\delta \; \omega(\tilde f; \delta) \leq \delta^\alpha \|\tilde f\|_\alpha $, the inequality \eqref{stimacoeff} gives
\begin{equation}\label{correz} \left | f_{\gamma_n} \right | \leq { \|\tilde f\|_\alpha\over 2^{1+\a}} 2^{- k \alpha}, \qquad  2^k \leq n < 2^{k+1}.\end{equation}
Therefore we have
$$ \left \| \sum_{n= 2^k}^{2^{k+1}-1} f_{\gamma_n} \left \langle \Psi_{\gamma_n}|\frak M_0\right \rangle \right \|_M \leq   \frac{||\tilde f||_{\a}}{2^{1+\a}}   2^{- k \alpha}   \sum_{n= 2^k}^{2^{k+1}-1}  \left \|  \left \langle \Psi_{\gamma_n}|\frak M_0\right \rangle \right \|_M .$$
Observe moreover that the number of elements of $\gamma_n$ is  $r_n = |\gamma_n| = u_n -1$ where $u_n$ is the number of "$1$" in the binary expansion of $n$. Hence, by the inequality \eqref{lemmadue} we find
$$ \sum_{n= 2^k}^{2^{k+1}-1}  \left \|  \left \langle \Psi_{\gamma_n}|\frak M_0\right \rangle \right \|_M \leq C \sum_{s=0}^{k-1} { k-1 \choose s} \mu_*^s = C (1+ \mu_*)^{k-1} ,$$
which, as $|f_{\emptyset}|\leq \|f\|_{\infty}$, together with \eqref{correz},  implies \eqref{norma}.   \end{proof}

  \par\smallskip

  \section{Weak dependence and the Central Limit Theorem}\label{WDCLT}
In the present paragraph we prove our main results for sums of sequences  of the type
$f(S^{t}\widehat \zeta)$, $t=0, 1,\ldots$. As $\mathcal P_{\Pi}$ and the measure $\wp$ induced by it on $\Omega_{+}$ are invariant under time shift, the sequence is stationary in distribution.\par\smallskip

In what follows   $\langle \cdot \rangle$ denotes an average with respect to $\wp$, $\mathcal P_{\Pi}$ or $\Pi$, according to the context.  Moreover we denote by $c_{i}, i=1,2,\ldots$, and sometimes by $\rm const$,   different constants which depend on the parameters of the model.
\par\smallskip
Let $f$ be a bounded measurable function on $\Omega_{+}$ with $\langle f\rangle_{\wp}=0$, and
\begin{equation}\label {somma0} S_{n}(\widehat\zeta\vert f) =   \sum_{t=0}^{n-1} f(S^{t} \widehat \zeta), \qquad n=1,2,\ldots . \end{equation}
If $f$ admits a Walsh expansion \eqref{f1} then    $ \sum_{\gamma \in \frak g} f_{\gamma}\langle \Psi_{\gamma} \rangle_{\wp} = \langle f \rangle_{\wp}=0$, so that
\begin{equation}\label{somma1} f(\widehat \zeta) = \sum_{\gamma\in \frak g,\; \gamma \neq \emptyset} f_{\gamma} \widehat \Psi_{\gamma}(\widehat \zeta), \qquad \widehat \Psi_{\gamma}(\widehat \zeta) = \Psi_{\gamma}(\widehat \zeta) - \left \langle \Psi_{\gamma}( \cdot)\right \rangle_{\wp}. \end{equation}
\par\smallskip

In what follows we make repeated use of the fact that if  $f$ is a function on $\widehat \Omega$ and $G^{(f)}:= \langle f(\cdot)\vert \frak M_{0}\rangle\in \cH_{M}$,  then, by   Theorem \ref{base}, $ \langle f(S^{t+h}\cdot)\vert \frak M_{h}\rangle = \cT^{t}G^{(f)}\in \cH_{M}$. \par\smallskip

\begin{theorem}\label{theorem:clt}  Let $f$ be a  function on $\Omega_{+}$, depending only on $\zeta_{0}, \ldots,\zeta_{m-1}$,  $m\geq 1$,  and such that $\langle f \rangle_{\wp} =0$. Then  the dispersion of normalized  sums
$ S_{n}(\widehat \zeta\vert f)\over \sqrt n$
 tends, as $n\to \infty$, to a finite non-negative  limit   \begin{equation}\label {dispersione} \sigma^{2}_{f} = \left \langle f^{2}(\cdot) \right \rangle_{\wp} + 2 \sum_{t=1}^{\infty} \left \langle f(\cdot) f(S^{t}\cdot)  \right \rangle_{\wp}    \end{equation}
 and the series is absolutely convergent. Moreover,  if $\sigma^{2}_{f}>0$,   the sequence $S_{n}(\widehat\zeta\vert f)\over \sqrt n$ tends weakly  to the centered gaussian distribution with  dispersion $\sigma^{2}_{f}$.
  \end{theorem}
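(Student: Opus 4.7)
The plan is to prove both claims through exponential decay of the covariances $r(t)=\langle f(\cdot)f(S^t\cdot)\rangle_\wp$: the decay yields the limit formula \eqref{dispersione} immediately and reduces the CLT to a Gordin-type martingale approximation.

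Set $G=G^{(f)}=\langle f(\cdot)|\frak M_0\rangle$. By Lemma \ref{lemma1}, $G\in\cH_M$, and since $\langle G\rangle_\Pi=\int G\,v\,d\Pi_0=\langle f\rangle_\wp=0$, Theorem \ref{base} places $G$ in $\widehat\cH_M$. For $t\geq m$, the Markov property of $\{\eta_t\}$ together with stationarity yields
$$
\langle f(S^t\widehat\zeta)\,|\,\frak M_0\vee\cdots\vee\frak M_{m-1}\rangle=(\cT^{\,t-m+1}G)(\eta_{m-1}),
$$
whence $|r(t)|\leq\|f\|_\infty\,\|\cT^{\,t-m+1}G\|_M\leq c_1\bar\mu^t$ by \eqref{mu-bar} and \eqref{normsup}. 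Combined with the trivial bound $|r(t)|\leq\|f\|_\infty^2$ for $t<m$, this gives absolute summability of $r$. Expanding $\langle S_n^2\rangle/n=\langle f^2\rangle+2\sum_{t=1}^{n-1}(1-t/n)\,r(t)$ and passing to the limit by dominated convergence identifies $\sigma^2_f$ with the series in \eqref{dispersione}; non-negativity is automatic.

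For the Gaussian limit (when $\sigma^2_f>0$), the function $h:=\sum_{k=0}^{\infty}\cT^k G$ converges in $\cH_M$ by \eqref{mu-bar} and satisfies $G=h-\cT h$; with $\cF_t=\frak M_0\vee\cdots\vee\frak M_t$ this gives the coboundary identity
$$
\sum_{t=0}^{n-1}G(\eta_t)=h(\eta_0)-h(\eta_n)+M_n,\qquad M_n=\sum_{t=0}^{n-1}\bigl[h(\eta_{t+1})-(\cT h)(\eta_t)\bigr],
$$
with $M_n$ a stationary martingale adapted to $\{\cF_t\}$ with $L^\infty$-bounded differences and the boundary term in $L^\infty$. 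The residual $\sum_{t=0}^{n-1}[f(S^t\widehat\zeta)-G(\eta_t)]$ is handled by a secondary telescoping: for $j=1,\ldots,m-1$ the increments $E_{t,j}-E_{t,j-1}$ with $E_{t,j}:=\langle f(S^t\widehat\zeta)|\cF_{t+j}\rangle$ are martingale differences at time $t+j$, and reindexing by $s=t+j$ reassembles them into $m-1$ further martingales adapted to $\{\cF_s\}$. Adding these to $M_n$ yields a single square-integrable martingale $\cM_n$ with $L^\infty$-bounded differences such that $S_n(\widehat\zeta|f)=\cM_n+O(1)$ in $L^\infty$. Brown's martingale CLT, combined with the ergodicity of $\{\eta_t\}$ under $\Pi$ (a consequence of Theorem \ref{base}, which leaves only the constants $\cT$-invariant in $\cH_M$), then gives $\cM_n/\sqrt n\Rightarrow\mathcal N(0,\sigma^2)$ with $\sigma^2=\lim n^{-1}\langle\cM_n^2\rangle=\sigma^2_f$.

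The main technical point is assembling the residual telescopings into genuine martingales adapted to $\{\cF_s\}$—each index $s$ receives overlapping contributions from $O(m)$ values of $t$, but this does not spoil the martingale property since $m$ is fixed—and verifying the ergodic convergence of the conditional variance $n^{-1}\sum_s\langle d_s^2|\cF_{s-1}\rangle$ required by Brown's theorem. An alternative via Bernstein blocks and the mixing estimate $|\langle\phi\psi\rangle-\langle\phi\rangle\langle\psi\rangle|\leq\|\phi\|_\infty\bar\mu^q\|\widehat\Psi\|_M$, obtained from \eqref{mu-bar} by the same conditioning trick, would also be possible, but is hampered by the $(1+\mu_*)^p$ growth of $\|\widehat\Psi\|_M$ on cylinders of length $p$ coming from Lemma \ref{lemma1}, which forces a delicate balance between block and gap lengths.
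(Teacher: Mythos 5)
Your proposal is correct in its overall strategy but follows a genuinely different route from the paper. For the variance part you and the paper do essentially the same thing: condition $f(S^{t}\cdot)$ on $\frak M_{0}^{m-1}$, use $G^{(f)}\in\widehat{\mathcal H}_{M}$ and the contraction \eqref{mu-bar} to get exponential decay of the covariances (this is the content of the paper's inequality \eqref{ineq1}), and pass to the limit in the Ces\`aro sum. For the Gaussian limit, however, the paper uses a Bernstein blocking scheme: it splits $[0,n-1]$ into long blocks and short gaps, shows the gap contribution is negligible in $L_{2}$, proves approximate factorization of the characteristic functions of the block sums via the conditional-expectation/contraction trick and the bounds \eqref{giesse}, and finishes with a Taylor expansion of the log-characteristic function. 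You instead solve the Poisson equation $G=h-\cT h$ with $h=\sum_{k\ge 0}\cT^{k}G$ (convergent in $\mathcal H_{M}$ by \eqref{mu-bar}), reduce $S_{n}$ to a stationary bounded-difference martingale via a Gordin coboundary plus the finite telescoping $f(S^{t}\cdot)-G(\eta_{t})=\sum_{j=1}^{m-1}(E_{t,j}-E_{t,j-1})$, and invoke a martingale CLT. Your reassembly of the overlapping differences into a single adapted martingale is sound because $m$ is fixed, and the boundary terms are $O(1)$ in $L^{\infty}$. What your route buys is a shorter, conceptually cleaner proof for cylinder $f$; what the paper's route buys is self-containedness (no external martingale CLT) and, more importantly, a scheme that carries over almost verbatim to Theorem \ref{theorem:cltalpha}, where $m$ must be replaced by a truncation level $m_{n}\sim\log n$ and your fixed-$m$ telescoping would need reworking. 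Incidentally, your closing worry about a $(1+\mu_{*})^{p_{n}}$ growth over blocks of length $p_{n}$ is unfounded: in \eqref{giesse} the factor $(1+\mu_{*})^{m}$ involves only the cylinder size $m$ of $f$, uniformly in the block length, because $G^{(S_{n})}=\sum_{t}\cT^{t}G^{(f)}$ sums geometrically.

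One step you assert rather than prove: ergodicity of the stationary chain, needed for the convergence of the normalized conditional variance in Brown's theorem (equivalently, for the Billingsley--Ibragimov CLT for stationary ergodic martingale differences). The fact that Theorem \ref{base} leaves only constants $\cT$-invariant \emph{in $\mathcal H_{M}$} is not literally ergodicity, which concerns invariant functions in $L_{2}(\Omega,\Pi)$. The gap is easily filled: cylinder functions lie in $\mathcal H_{M}$ and are dense in $L_{2}(\Omega,\Pi)$, $\cT$ is an $L_{2}(\Pi)$-contraction since $\Pi$ is invariant, and \eqref{mu-bar} then gives $\cT^{n}F\to\langle F\rangle_{\Pi}$ in $L_{2}(\Pi)$ for all $F$, i.e.\ mixing. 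You should include this short approximation argument, and also spell out the identification $\lim n^{-1}\langle\mathcal M_{n}^{2}\rangle=\sigma_{f}^{2}$ from $S_{n}=\mathcal M_{n}+O(1)$; with those two additions the proof is complete.
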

\begin{proof}\label {Proof.}

The proof of the theorem  is based on two basic inequalities.
 \begin{equation}\label{ineq1} \left \|  \left \langle f(\cdot) f(S^{t}\cdot)\vert \frak M_{0} \right \rangle \right \|_{M}   \leq c_{1}\|f\|^{2}_{\infty} \bar \mu^{\max \{0, t-m+1\}}  (1+ \mu_{*})^{2m},\end{equation}
 \begin{equation} \label{giesse} \left \| G^{(S_{n})}\right \|_{M } \leq  c_{2}\;  \|f\|_{\infty} \;  (1+ \mu_{*})^{m}, \qquad  \left \| G^{(\widehat S^{2}_{n})}\right \|_{M } \leq  c_{3}\;  \|f\|^{2}_{\infty} \;m  (1+ \mu_{*})^{m}, \end{equation}
 where $\widehat S^{2}_{n}(\widehat \zeta\vert f) = S^{2}_{n}(\widehat \zeta\vert f) - \langle S^{2}_{n}(\cdot\vert f) \rangle $ and  $c_{1}, c_{2}, c_{3}$ are  constants independent of $m$. \par\smallskip
 For the proof of \eqref{ineq1}, observe that  if $t\leq m-1$, then $ f^{(2)}_{t}(\widehat \zeta):=f(\widehat \zeta) f(S^{t}(\widehat \zeta))$ is a cylinder function $\cM_{0}^{t+m-1}$- measurable and
     bounded by $\|f\|^{2}_{\infty}$. Hence, by Lemma \ref{lemma1} and \eqref{coeff}, $\|f^{(2)}_{t}\|_{M}\leq \cst \|f\|^{2}_{\infty}(1+\mu_{*})^{m+t}$, and \eqref{ineq1} holds for $t\leq m-1$.\par
     If $t\geq m$  taking the expectation with respect to $\frak M_{0}^{m-1}$ we have
  $$  \left \langle f(\cdot) f(S^{t}\cdot)\vert \frak M_{0} \right \rangle =  \left \langle f(\widehat \zeta) [\cT^{t-m+1} G^{(f)}](\eta_{m-1})\vert \frak M_{0} \right \rangle .$$
As  $ G^{(f)} \in \widehat \cH_{M}$, expanding $f$ in Walsh series and using Proposition \ref{proposition} in the Appendix and Lemma \ref{lemma1} we see that Inequality \eqref{ineq1} also holds for  $t\geq m-1$.

\par\smallskip
The first inequality in  \eqref{giesse} is a simple consequence of the  inequality $\|\langle f(S^{t}\cdot)\vert \frak M_{0}\|_{M} =\| \cT^{t}G^{(f)}\|_{M}\leq \bar \mu^{t} \|G^{(f)}\|_{M}$, of Lemma \ref{lemma1} and of the inequality $|f_{\gamma}| \leq \|f\|_{\infty}$ (see \eqref{coeff}). Moreover, setting $\widehat f^{2}(\widehat \zeta) = f^{2}(\widehat \zeta) - \langle f^{2}(\cdot) \rangle $  and $\widehat f^{(2)}_{t}(\widehat \zeta) =  f^{(2)}_{t}(\widehat \zeta) - \langle  f^{(2)}_{t}(\cdot) \rangle$, we have
 \begin{equation}\label{ineq2} \left  \|G^{(\widehat S^{2}_{n})}\right \|_{M} \leq \sum_{j=0}^{n-1} \left \| \cT^{j} G^{(\widehat f^{2})}\right \|_{M} +  2 \sum_{j=0}^{n-2} \sum_{t=1}^{n-j-1}      \left \| \cT^{j} G^{(\widehat f_{t}^{2})}\right \|_{M}, \end{equation}
 and the second inequality in \eqref{giesse} follows by observing that $\widehat f^{2}$ is a cylinder function with zero average and $\|\widehat f^{2}\|_{\infty}\leq \|f\|_{\infty}^{2}$, and using the estimate \eqref{ineq1} for $\|G^{(\widehat f_{t}^{2})} \|_{M}$.

 \par\smallskip
Passing to the assertions of the theorem, observe first that, by  the property \eqref{normsup} of $\cH_{M}$,  the absolute convergence  of the series  in \eqref{dispersione}  follows
   from Inequality \eqref{ineq1}.
  \par\smallskip
Assuming that $\sigma_{f}^{2}>0$, for  the proof of the CLT we adopt a Bernstein scheme. Let $p_{n} = [n^{\beta}]$, $q_{n} = [n^{\delta}]$, $k_{n} =[{n\over p(n) + q(n)}]$, with $0< \delta < \beta  < 1/4$.
The interval of integers $[0, n-1]$ is divided into   subintervals of length $p_{n}$ and $q_{n}$: $$I_{\ell} = [(\ell-1)(p_{n}+q_{n}), \; \ell p_{n} +(\ell-1) q_{n}-1] ,\quad J_{\ell}=[ \ell p_{n}+ (\ell-1) q_{n}, \ell (p_{n}+q_{n})-1],$$ $\ell = 1, \ldots k_{n}$, and the rest $J_{*}= [0, n-1]\setminus \cup_{j=1}^{k_{n}} [I_{j}\cup J_{j}]$. \par\smallskip

The sum \eqref{somma0}  is then written as $S_{n}(\widehat\zeta\vert f) = S^{(M)}_{n}(\widehat\zeta\vert f) + S^{(R)}_{n}(\widehat\zeta\vert f)$ where
\begin{equation}\label{splits}  S^{(M)}_{n}(\widehat\zeta\vert f) = \sum_{\ell=1}^{k_{n}} S_{I_{\ell}}(\widehat\zeta\vert f), \qquad S^{(R)}_{n}(\widehat\zeta\vert f) =
 \sum_{\ell=1}^{k_{n}} S_{J_{\ell}}(\widehat\zeta\vert f) + S_{J_{*}}(\widehat\zeta\vert f) \end{equation}
and $S_{I_{\ell}}, S_{J_{\ell}}, S_{J_{*}}$ denote the sums over the corresponding subinterval. \par\smallskip

We first prove that  the $L_{2}$-norm of  $ S^{(R)}_{n}/\sqrt n$ vanishes as $n\to \infty$, i.e.,
\begin{equation}\label{estim1}  \left \langle \left (  \sum_{\ell=1}^{k_{n}} S_{J_{\ell}}(\cdot\vert f) \right )^{2} \right \rangle  = k_{n} \left \langle S^{2}_{J_{1}}(\cdot\vert f) \right\rangle + 2 \sum_{1\leq s < t \leq k_{n}} \left \langle S_{J_{s}}(\cdot|f) S_{J_{t}}(\cdot|f) \right \rangle = \cO (n^{1-\beta +\delta}).\ee

For the proof, observe that  Inequality \eqref{ineq1} implies that $\langle S^{2}_{J_{1}}(\cdot|f) \rangle = \cO(q_{n})$, so that the first term on the right of \eqref{estim1} is of the order $k_{n} q_{n} \sim n^{1-\beta +\delta}$. \par
For the second term, observe   that, by translation invariance,  recalling that  $S_{q_{n}}(\widehat \zeta\vert f)$ is $\frak M_{0}^{q_{n}+m-2}$-measurable and taking the corresponding conditional probability,  \begin{equation}\label{elle}\langle S_{J_{s}}(\cdot|f) S_{J_{t}}(\cdot|f)\vert \frak M_{0}  \rangle =    \left \langle S_{q_{n}}(\cdot|f) \left [ \cT^{(t-s)\ell_{n} + p_{n}-m +2} \; G^{(S_{q_{n}})}\right ](\eta_{q_{n}-m +2})\vert \frak M_{0} \right \rangle ,\end{equation}
   where $\ell_{n}= p_{n}+q_{n}$.
   Therefore,  recalling the inequalities \eqref{normsup}, we get the estimate
\begin{equation}\label{ineq3} \left | \left \langle S_{J_{s}}(\cdot|f) S_{J_{t}}(\cdot|f) \right \rangle \right | \leq  {\rm const} \; (1+\mu_{*})^{m} \; \|f\|^{2}_{\infty} \; q_{n} \bar \mu^{(t-s) \ell_{n}+ p_{n}- m}.\end{equation}
 As  $k_{n} q_{n} \bar \mu^{p_{n}} \leq {\rm const} \; \bar \mu ^{p_{n}\over 2}$,  the double sum on the right of  \eqref{estim1} is of the order $\cO(\bar \mu^{p_{n}\over 2})$, so that \eqref{estim1} is proved.\par \smallskip
As for $S_{J_{*}}$,  \eqref{ineq1}  implies $\langle S^{2}_{J_{*}}(\cdot|f)\rangle\leq  \langle S^{2}_{p_{n}+q_{n}}(\cdot|f)\rangle= \cO(n^{-1+\beta})$.   This fact, together with \eqref{estim1}, proves that  $\langle (S^{(R)}_{n}(\widehat\zeta\vert f))^{2}\rangle /n = \cO(n^{-\beta+\delta})$, and, as  $\beta >\delta$,    $S^{(R)}_{n}$ does not contribute to the limiting distribution.

\par\smallskip
We now show that the random variables $\{S_{I_{\ell}}\}_{\ell=1}^{k_{n}}$ are  almost independent for large $n$, i.e., for the characteristic functions $\phi^{(\ell)}_{n}(\lambda|\widehat \zeta)= \exp\{i{\lambda\over \sqrt n}S_{I_{\ell}}(\widehat \zeta|f)\}$ we have
    \begin{equation}\label{indep} \left \langle \prod_{\ell=1}^{k_{n}} \phi^{(\ell)}_{n}(\lambda|\widehat \zeta) \right \rangle - \prod_{\ell=1}^{k_{n}} \left \langle  \phi^{(\ell)}_{n}(\lambda|\widehat \zeta)
 \right \rangle \to 0, \qquad n\to \infty.\end{equation}
 We proceed by  iteration. As a first step we consider the difference
$$  \left \langle \prod_{\ell=1}^{k_{n}} \phi^{(\ell)}_{n}(\lambda|\widehat \zeta) \right \rangle -  \left \langle \prod_{\ell=1}^{k_{n}-1} \phi^{(\ell)}_{n}(\lambda|\widehat \zeta) \right \rangle \left \langle \phi^{(k_{n})}_{n}(\lambda|\widehat \zeta) \right \rangle  = $$
\begin{equation}\label{quasind}  =\left \langle \prod_{\ell=1}^{k_{n}-1} \phi^{(\ell)}_{n}(\lambda|\widehat \zeta) \;  \widehat \phi^{(k_{n})}_{n}(\lambda|\widehat \zeta)    \right \rangle ,  \qquad \widehat  \phi^{(\ell)}_{n}(\lambda|\widehat \zeta) =  \phi^{(\ell)}_{n}(\lambda|\widehat \zeta) - \left \langle  \phi^{(\ell)}_{n}(\lambda|\widehat \zeta) \right \rangle. \end{equation}

We expand $\widehat \phi^{(k_{n})}_{n}(\lambda|\widehat \zeta)$ in Taylor series at $\lambda=0$, we have, for some $\lambda_{*}$, $|\lambda_{*}| \leq |\lambda|$,
\begin{equation}\label{3exp}  \widehat  \phi^{(k_{n})}_{n}(\lambda|\widehat \zeta)= i{\lambda\over \sqrt n} S_{I_{k_{n}}}(\widehat \zeta|f)  - {\lambda^{2}\over 2n} \left (S^{2}_{I_{k_{n}}}(\widehat \zeta|f) -\left \langle (S^{2}_{I_{k_{n}}}(\cdot|f) \right \rangle \right ) + i^{3}{\lambda^{3}\over n^{3\over 2} 3!} R_{n}(\lambda_{*}, \widehat \zeta),\end{equation}
\begin{equation} \label{expanding}  R_{n}(\lambda_{*}, \widehat \zeta) =     S^{3}_{I_{\ell}}(\widehat\zeta\vert f) \exp \{i {\lambda_{*}\over \sqrt n} S_{I_{\ell}}(\widehat\zeta\vert f)\} - \left \langle S^{3}_{I_{\ell}}(\widehat\zeta\vert f) \exp \{i {\lambda_{*}\over \sqrt n} S_{I_{\ell}}(\widehat\zeta\vert f)\}  \right \rangle  , \end{equation}
Clearly $|R_{n}(\lambda_{*}, \widehat \zeta)|\leq 2
 p^{3}_{n} |\lambda|^{3}\|f\|^{3}_{\infty} = \cO(n^{3 \beta})$, so that,
  as $\beta < 1/4$,  we need only consider the first two terms of the expansion \eqref{3exp}.\par\smallskip

 The product of the first $k_{n}-1$ factors in the expectation in \eqref{quasind} is measurable with respect to $\frak M^{t_{n}}_{0}$, where $t_{n} = (k_{n}-1) p_{n} + (k_{n}-2) q_{n} +m-2$. Taking the corresponding conditional expectation, by Inequality \eqref{giesse} we get for
 the first order term  the estimate
 \begin{equation}\label{ineq4}  \left | \left \langle \prod_{\ell=1}^{k_{n}-1} \phi^{(\ell)}_{n}(\lambda|\widehat \zeta) \left [\cT^{q_{n}-m+2}  G^{(S_{p_{n}})}   \right ](\eta_{t_{n}})    \right \rangle \right | \leq c_{4}\; \bar \mu^{q_{n}-m}\;  (1+\mu_{*})^{m} \|f\|_{\infty}. \end{equation}

 \par\smallskip

For the second order term, proceeding in the same way, and taking into account the second inequality \eqref{ineq2}  we come to the estimate
\begin{equation}\label{ineq4bis}  \left | \left \langle \prod_{\ell=1}^{k_{n}-1} \phi^{(\ell)}_{n}(\lambda|\widehat \zeta) \left [\cT^{q_{n}-m+2}  G^{(\widehat S^{2}_{p_n})}   \right ](\eta_{t_{n}})    \right \rangle \right | \leq c_{5}\; \bar \mu^{q_{n}-m}\;  m (1+\mu_{*})^{m} \|f\|_{\infty}^{2} . \end{equation}

 \par\smallskip
 Iterating the procedure for the remaining product $ \langle \prod_{\ell=1}^{k_{n}-1} \phi^{(\ell)}_{n}(\lambda|\widehat \zeta)  \rangle$, we see that the quantity on the left of \eqref{indep} is of the order $\cO(k_{n} n^{-3({1\over 2}-\beta)}) = \cO(n^{-{1\over 2} + 2\beta})$, so that , as $\beta < 1/4$, it vanishes as $n\to \infty$.
 \par\smallskip

 We are left with a sum $\tilde S^{(M)}_{n}$ of $k(n)$ independent variables distributed as $S_{p_{n}}(\widehat \zeta\vert f)$. The $\log$ of the characteristic function of the corresponding normalized sum is
\begin{equation}\label{charact} k_{n}\;  \psi_{n}({\lambda\over \sqrt n} \vert f),  \qquad  \quad \psi_{n}(\lambda\vert f) = \log \left \langle e^{i {\lambda } S_{q_{n}}^{(f)}(\cdot)}\right  \rangle . \end{equation}
Expanding  $\psi_{n}$ in Taylor series at $\lambda=0$, we see, in analogy to the proof above, that the third order remainder is of order $\cO(n^{-3({1\over2}-\beta)})$, so that it does not contribute to the limit. The first order term vanishes, and we see that
$$\lim_{n\to\infty} k_{n}  \psi_{n}({\lambda\over \sqrt n} \vert f) =-  {\lambda^{2}\over 2 } \;  \lim_{n\to \infty} {k_{n}\over n} \left [ p_{n} \left \langle f^{2}(\cdot) \right \rangle + 2 \sum_{j=0}^{p_{n}-1} \sum_{k=1}^{p_{n}-j-1} \left \langle f(\cdot) f(S^{k})\cdot \right \rangle \right ] .$$
As   ${k_{n}p_{n}\over n} \to 1$, the expression on the right tends to $-{\lambda^{2}\over 2} \sigma^{2}_{f}.$
  The theorem is proved.
 \end{proof}

 \begin{theorem}\label{theorem:cltalpha}  Let $f$ be a  function on $\Omega_{+}$, satisfying the assumptions of Lemma \ref{lemma:hoelder} with $\alpha > 1/2$ and such that $\langle f \rangle_{\wp} =  0$.  Then, if $\bar \mu$ is  small enough,  the dispersion of the normalized  sums $ S_{n}(\widehat \zeta\vert f)\over \sqrt n$
 tends, as $n\to \infty$, to a finite non-negative  limit   \begin{equation}\label {dispersione1} \sigma^{2}_{f} = \left \langle f^{2}(\cdot) \right \rangle_{\wp} + 2 \sum_{t=1}^{\infty} \left \langle f(\cdot) f(S^{t}\cdot)  \right \rangle_{\wp}    \end{equation}
 were the series on the right is absolutely convergent. Moreover,  if $\sigma^{2}_{f}>0$,   the sequence $S^{(f)}_{n}(\widehat \zeta)$ tends weakly  to the centered gaussian distribution with  dispersion $\sigma^{2}_{f}$.
  \end{theorem}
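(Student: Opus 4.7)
The plan is to reduce Theorem \ref{theorem:cltalpha} to Theorem \ref{theorem:clt} by approximating $f$ with a sequence of cylinder functions and controlling the remainder uniformly in $n$. For each $m\geq 1$ I set
\begin{equation*}
f^{(m)}(\widehat\zeta) = \sum_{\gamma \subset \{0,\ldots,m-1\}} f_\gamma\,\Psi_\gamma(\widehat\zeta),
\qquad r^{(m)} = f - f^{(m)},
\end{equation*}
so that $f^{(m)}$ depends only on $\zeta_0,\ldots,\zeta_{m-1}$, and (since $f_\emptyset = \langle f\rangle_\wp = 0$) both $f^{(m)}$ and $r^{(m)}$ have zero mean. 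Under the coding map $\cF$, the image of $f^{(m)}$ is exactly the dyadic block-average $\Sigma_{2^m}(\tilde f;\cdot)$ appearing in \eqref{ridotta}.

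Two basic tail estimates come first. The representation \eqref{ridotta} together with the Hölder continuity of $\tilde f$ yields
\begin{equation*}
\|r^{(m)}\|_\infty \leq \|\tilde f\|_\alpha\, 2^{-m\alpha},
\end{equation*}
while repeating the summation in the proof of Lemma \ref{lemma:hoelder} starting at $k=m$ instead of $k=0$ gives
\begin{equation*}
\|G^{(r^{(m)})}\|_M \leq \frac{C_\alpha'\,\|\tilde f\|_\alpha\,\kappa^m}{1-\kappa},
\qquad \kappa = 2^{-\alpha}(1+\mu_*)<1.
\end{equation*}

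Next I would verify the absolute convergence of the series in \eqref{dispersione1}. Because $\langle f\rangle_\wp = 0$, Theorem \ref{base} places $G^{(f)}$ in $\widehat{\mathcal H}_M$, so the contraction \eqref{mu-bar} combined with \eqref{normsup} gives
\begin{equation*}
\bigl|\langle f(\cdot)\,f(S^t\cdot)\rangle_\wp\bigr| = \bigl|\langle f\cdot \cT^t G^{(f)}\rangle\bigr| \leq \|f\|_\infty\,\bar\mu^t\,\|G^{(f)}\|_M,
\end{equation*}
with $\|G^{(f)}\|_M$ finite by Lemma \ref{lemma:hoelder}. The same chain of inequalities, with $f$ replaced by $r^{(m)}$, yields the uniform variance bound
\begin{equation*}
\frac{1}{n}\bigl\langle S_n(\widehat\zeta\vert r^{(m)})^2\bigr\rangle \leq \|r^{(m)}\|_\infty^2 + \frac{2\bar\mu}{1-\bar\mu}\,\|r^{(m)}\|_\infty\,\|G^{(r^{(m)})}\|_M,
\end{equation*}
which by the two displayed tail estimates tends to $0$ as $m\to\infty$, independently of $n$. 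A parallel telescoping argument shows $\sigma^2_{f^{(m)}}\to\sigma^2_f$.

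To conclude, I apply Theorem \ref{theorem:clt} to each cylinder function $f^{(m)}$, obtaining $S_n(\widehat\zeta\vert f^{(m)})/\sqrt n \Rightarrow \mathcal N(0,\sigma^2_{f^{(m)}})$, and then run a standard triangular approximation: Chebyshev's inequality combined with the uniform variance bound above absorbs the remainder in characteristic functions, the limit $n\to\infty$ is taken for fixed $m$ (via Theorem \ref{theorem:clt}), and finally $m\to\infty$ (via the variance convergence) gives weak convergence to $\mathcal N(0,\sigma^2_f)$. The main technical obstacle is establishing the two remainder estimates \emph{simultaneously} at rates exponential in $m$: the Hölder hypothesis $\alpha>1/2$ together with the smallness of $\bar\mu$ is used to keep $\kappa<1$ and to balance the $2^{-m\alpha}$ and $\kappa^m$ decays against the $\mathcal H_M$-factors appearing in the correlation estimates, so that the tail variance is genuinely negligible uniformly in $n$.
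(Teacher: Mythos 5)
Your overall strategy---truncate the Walsh series at a fixed level $m$, apply Theorem \ref{theorem:clt} to the cylinder function $f^{(m)}$, and pass to the limit $m\to\infty$ after $n\to\infty$ via a uniform-in-$n$ variance bound on the remainder---is a legitimate alternative to what the paper does (the paper instead reruns the Bernstein block argument directly for $f$, truncating at an $n$-dependent level $m_{n}=[\frac{4}{\alpha}\log_{2}n]$ and relying on the Appendix estimate \eqref{nnjeravv}). However, your execution has a genuine gap at the decorrelation step. You write
\begin{equation*}
\bigl|\langle f(\cdot)\,f(S^{t}\cdot)\rangle_{\wp}\bigr| = \bigl|\langle f\cdot \cT^{t} G^{(f)}\rangle\bigr| \leq \|f\|_{\infty}\,\bar\mu^{t}\,\|G^{(f)}\|_{M},
\end{equation*}
but the first equality is false: $[\cT^{t}G^{(f)}](\eta_{0})=\langle f(S^{t}\cdot)\,\vert\,\frak M_{0}\rangle$, and to replace $f(S^{t}\cdot)$ by this conditional expectation inside the product you would need $f(\cdot)$ to be $\frak M_{0}$-measurable. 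It is not: $f$ depends on the whole sequence $\zeta_{0},\zeta_{1},\ldots$, so the two factors $f(\widehat\zeta)$ and $f(S^{t}\widehat\zeta)$ have overlapping dependence for every $t$, and no finite conditioning separates them. The same flaw invalidates your uniform variance bound for $r^{(m)}$, which is the linchpin of the whole triangular scheme: $r^{(m)}$ is likewise not a cylinder function, so $\langle r^{(m)}(\cdot)\,r^{(m)}(S^{t}\cdot)\rangle$ cannot be estimated by the conditioning argument you give. (Note that even for a cylinder function depending on $\zeta_{0},\ldots,\zeta_{m-1}$ the correct statement, \eqref{ineq1}, conditions on $\frak M_{0}^{m-1}$ and only yields decay $\bar\mu^{t-m+1}$ for $t\geq m$.)

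The missing ingredient is precisely the paper's inequality \eqref{nnjeravv}, $\|\langle f(\cdot)f(S^{t}\cdot)\vert\frak M_{0}\rangle\|_{M}\leq c_{6}\|\tilde f\|_{\cC^{\alpha}}^{2}\kappa^{t}$, whose proof requires splitting the double Walsh sum according to whether the support of $\gamma$ reaches past $m_{\gamma'}+t$ or not, and it is in the overlapping part $C^{(2)}_{t}$ that the hypothesis $\alpha>1/2$ is used (one needs $2^{-2\alpha+1}(1+\mu_{*})<1$). Your argument never invokes $\alpha>1/2$---only $\kappa<1$, which holds for any $\alpha>0$ once $\bar\mu$ is small---and that is a symptom of the gap: the hard part of the theorem is exactly the correlation estimate for non-cylinder Hölder functionals, which your proposal assumes rather than proves. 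If you import \eqref{nnjeravv} (or prove an analogous bound by truncating the first factor at level $\sim t/2$ and balancing $\bar\mu^{t/2}$ against $\|\tilde f\|_{\alpha}2^{-\alpha t/2}$ via \eqref{njerav5}), your approximation scheme can be completed; as it stands it is not a proof.
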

  \begin{proof} \label {Proof}   The proof repeats the pattern of the previous proof, to which we refer.  Inequalities \eqref{ineq1} and   \eqref{giesse} are replaced by
 \begin{equation} \label{nnjeravv} \left \| \left \langle f(\cdot) f(S^{t}\cdot) |\frak M_{0}\right \rangle\right |_{M} \leq c_{6}  \|\tilde f\|_{\cC^{\alpha}}^{2} \kappa^{t}, \end{equation}
 \begin{equation} \label{giesseb} \left \| G^{(S_{n})}\right \|_{M } \leq  c_{7}\;  \|\tilde f\|_{\cC^{\alpha}},    \qquad  \left \| G^{(\widehat S^{2}_{n})}\right \|_{M } \leq  c_{8}\;  \|\tilde f\|_{\cC_{\alpha}}^{2}.\end{equation}
 \par\smallskip
 The proof of the estimate \eqref{nnjeravv} is  deferred to the Appendix. The first inequality \eqref{giesseb} is proved as in the previous theorem, recalling Lemma \ref{lemma:hoelder}. \par\smallskip
 The second inequality  \eqref{giesseb} follows from Inequality \eqref{ineq2}, observing that $\widehat f^{2}(\cF^{-1}\cdot) \in \cC^{\alpha}$ and using   Inequality \eqref{nnjeravv}.
 \par\smallskip
 For  the estimate \eqref{estim1} observe that \eqref{nnjeravv} again implies that  $\langle S^{2}_{J_{1}}(\cdot|f) \rangle = \cO(q_{n})$. For the second term on the right of \eqref{estim1}
   we need, as in  \cite{Ibr71},  that the functions are well approximated by their conditional probabilities on finite $\sigma$-algebras.
This property is provided by the representation \eqref{ridotta} for the partial sums,  which gives
  \begin{equation} \label{njerav5} \left | f(\widehat \zeta) -  \Sigma_{2^{n}}(f; \widehat \zeta ) \right | \leq \|\tilde f \|_{\alpha} 2^{- \alpha n}. \end{equation}
 Let $m_{n} =[ {4\over \alpha} \log_{2} n]$, where $[\cdot]$ denotes the integer part. In the expression  \eqref{elle}, in the sum $S_{J_{s}}(\widehat \zeta\vert f)$,
 we replace the function $f$ by its partial sum $\Sigma_{2^{m_{n}}}$. The corresponding sum is denoted $\tilde S_{J_{s}}$.
 By Inequality \eqref{njerav5} we have
$$\left  \langle S_{J_{s}}(\cdot|f) S_{J_{t}}(\cdot|f)\vert \frak M_{0} \right \rangle  = \left  \langle \tilde S_{J_{s}}(\cdot|f) S_{J_{t}}(\cdot|f)\vert \frak M_{0} \right \rangle  + \cO\left ({q_{n}^{2} / n^{4}}\right ).$$
  $\tilde S_{J_{s}}(\cdot|f)$ can be treated as $ S_{J_{s}}(\cdot|f)$ in the previous proof,  so that the corresponding conditional expectation is written, if $n$ is so large that $p_{n}> m_{n}$, as
   \begin{equation}\label{elleb}    \left \langle \tilde S_{q_{n}}(\cdot|f) \left [ \cT^{(t-s)\ell_{n} + p_{n}-m_{n} +2} \; G^{(\tilde S_{q_{n}})}\right ](\eta_{q_{n}-m_{n} +2})\vert \frak M_{0} \right \rangle\end{equation} (the tilder in $\tilde S_{q_{n}}$ again denotes that $f$ is replaced by  $\Sigma_{2^{m_{n}}}$). By the first inequality \eqref{giesseb}
   $$\left | \left  \langle \tilde S_{J_{s}}(\cdot|f) S_{J_{t}}(\cdot|f)\vert \frak M_{0} \right \rangle \right | \leq \cst \; \|\tilde f	\|_{\cC^{\alpha}}^{2} \bar \mu^{(t-s)\ell_{n}+ p_{n}-m_{n}},$$
   and, as $k_{n}q_{n} \bar \mu^{p_{n}-m_{n}}\leq \cst \; \bar \mu^{p_{n}\over 2}$, we see that the same \eqref{estim1} holds in this case. The estimate for  $S_{I_{*}}$ is obvious, so that the negligibility of $S^{(R)}_{n}$ is proved.
   \par \smallskip

Further, we pass to the variables $\tilde S_{I_{\ell}}$, $\ell =1,\ldots, k_{n}$, obtained, as before, by replacing $f$ with the partial sum $\Sigma_{2^{m_{n}}}$. The correction is of order $\cO(n^{-3})$, so that it can be neglected. The rest of the proof repeats the previous steps, with the only changes that $m$ is replaced by $m_{n}$ and we use the estimates \eqref{giesseb}.
We omit the obvious details.
  \end{proof}

\section{Appendix}
\par\noindent
 {\bf Proof of inequality \eqref{lemmadue}.} Observe that, by symmetry  with respect to  the change of sign $\bar \eta(x)\rightarrow-\bar \eta(x)$, $x\in\Z^d$, the density $v(\bar \eta)$ is even. Moreover any finite trajectory of the Markov chain has the same probability of the trajectory obtained by  sign exchange.\par\smallskip
  The functions $\Phi_{\Gamma}$  defined by \eqref{besis}  are even (odd) for $|\Gamma|$ even (odd). Therefore  for $|\Gamma|$ odd we have $\langle \Phi_{\Gamma}\rangle_{\Pi}=0$, and  also $\langle \mathcal T^r\Phi_{\Gamma}\rangle_{\Pi}=0$, $r>0$. The functions $\Psi_{\gamma}$ are also even (odd) for  $|\gamma|$ even (odd), and for $|\gamma|$ odd $\langle \Psi_{\gamma}\rangle_{\wp} =\langle G_{\gamma}\rangle_{\Pi}=0$.
\par\smallskip
For $|\gamma|$ even we set
 \begin{equation}\label{decompo} G_{\gamma} = \langle G_{\gamma} \rangle_{\Pi} + \widehat G_{\gamma}, \qquad \widehat G_{\gamma}\in \widehat \cH_{M}. \ee
 If $\gamma = \{t_{0}, \ldots, t_{k}\} $, $k\geq 1$, we have, by \eqref{A2},
 $ G_{\gamma}(\bar \eta) = \Phi_{\{0\}}(\bar \eta)   [ \cT^{r_{k}} G_{\gamma\setminus \{t_{0}\}}  ](\bar \eta)$.
Therefore, if $|\gamma|\geq 2$ is even we have
\begin{equation}\label{even} \| G_{\gamma}\|_{M} \leq M \bar \mu^{r_{k}} \|G_{\gamma\setminus \{t_{0}\}}\|_{M}, \end{equation}
and if $|\gamma| > 1$ is odd
\begin{equation}\label{odd} \| G_{\gamma}\|_{M} \leq M \left ( | \langle G_{\gamma\setminus \{t_{0}\}}\rangle | +  \bar \mu^{r_{k}} \|\widehat G_{\gamma\setminus \{t_{0}\}}\|_{M} \right) \leq M (1 + 2\bar \mu^{r_{k}}) \| G_{\gamma\setminus \{t_{0}\}}\|_{M}, \end{equation}
where in the second inequality we take into account that $| \langle  G_{\gamma} \rangle | \leq \| G_{\gamma}\|_{\infty} \leq \|G_{\gamma}\|_{M}$.
 \par
\smallskip
   For    $|\gamma|= 1$,    $G_{\{t_{0}\}}(\bar \eta) = \Phi_{\{0\}}(\bar \eta)$ so that  $\|G_{\{t_{0}\}}\|_{M} = M$, and  for $|\gamma|=2$ we have
   $ \|G_\gamma \|_M \leq M \| \cT^{r_1} \Phi_{\{0\}} \|_M \leq M^2 \bar \mu^{r_1}$.
   Inequalities \eqref{even} and \eqref{odd} imply that
 \begin{equation}\label{iterazione} \|G_\gamma\|_M \leq M^{|\gamma|} \prod_{j\;  \rm{odd}}\bar \mu^{r_{j}}  \prod_{j\;  \rm{even}} (2\bar \mu^{r_{j}} +1) \end{equation}
which implies  \eqref{lemmadue}. \hfill \square
\par\smallskip

The following proposition is a simple consequence of the previous proof. \par
\begin{proposition} \label{proposition}  Under the assumptions of Lemma \ref{lemma1}, if $\gamma = \{t_0, \ldots, t_k\}$,  $G\in \widehat \cH_{M}$,  and  $t \geq t_{k}$, the following inequality holds, for some positive constant $C_{*}$.
 \begin {equation} \label{oddeven} \left \| \left \langle \Psi_{\gamma}(\zeta)\;  G(\eta_{t}) \vert  \frak M_{0}\right \rangle  \right \|_{M} \leq C_{*}\; \| G\|_{M} \; \bar \mu^{t-t_{k}} \; \mu_{*}^{|\gamma|}. \end{equation}
\end{proposition}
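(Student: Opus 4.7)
The statement is a natural extension of \eqref{lemmadue}: the innermost ``seed'' of the recursion used in the appendix is replaced by $G(\eta_t)$ with $G \in \widehat\cH_M$ and $t \geq t_k$. My plan is to replay the argument leading to \eqref{lemmadue}, absorbing the presence of $G$ into an extra base-case factor $\bar\mu^{t-t_k}\|G\|_M$ coming from the contraction of $\cT$ on $\widehat\cH_M$.

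Applying the Markov property iteratively (conditioning successively on $\eta_{t_0}, \eta_{t_1}, \ldots, \eta_{t_k}$), one obtains the closed form
\begin{equation*}
\langle \Psi_{\gamma}(\zeta)\, G(\eta_t) \vert \frak M_0 \rangle(\bar\eta) \;=\; \cT^{t_0}\bigl[C_0\bigr](\bar\eta),
\end{equation*}
where the functions $C_j$ are defined recursively by $C_j = \Phi_{\{0\}}\cdot \cT^{t_{j+1}-t_j} C_{j+1}$ for $j = 0, \ldots, k-1$, with the base case $C_k = \Phi_{\{0\}}\cdot \cT^{t-t_k} G$. This is structurally identical to \eqref{A2}, except that the innermost multiplicand is $\cT^{t-t_k}G$ rather than $1 = \Phi_\emptyset$. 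Since $G \in \widehat\cH_M$, Theorem \ref{base} gives $\tilde G := \cT^{t-t_k}G \in \widehat\cH_M$ with $\|\tilde G\|_M \leq \bar\mu^{t-t_k}\|G\|_M$, and $\tilde G$ has the same parity as $G$ under $\eta \mapsto -\eta$ because $\cT$ preserves this symmetry.

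Next I iterate the even/odd norm bounds \eqref{even}--\eqref{odd}. To make the parity bookkeeping clean, first decompose $G = G^{(e)} + G^{(o)}$ into its components even and odd under $\eta \mapsto -\eta$---both belong to $\widehat\cH_M$ and $\|G^{(e)}\|_M + \|G^{(o)}\|_M = \|G\|_M$, since the $\Phi_\Gamma$ basis splits by the parity of $|\Gamma|$---and by linearity treat each case separately. In either case the parities of the $C_j$ alternate along the recursion, so each step contributes either a factor $M\bar\mu^{r_j}$ (when $C_{j+1}$ is odd, hence has zero mean, so the contraction $\|\cT^{r_j}C_{j+1}\|_M \leq \bar\mu^{r_j}\|C_{j+1}\|_M$ applies) or a factor $M(1 + 2\bar\mu^{r_j})$ (when $C_{j+1}$ is even, handled by splitting off the mean via $|\langle C_{j+1}\rangle| \leq \|C_{j+1}\|_M$). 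The telescoping leading to \eqref{iterazione} then yields
\begin{equation*}
\|C_0\|_M \;\leq\; c\, M^{|\gamma|}\, \bar\mu^{[|\gamma|/2]}(1+2\bar\mu)^{[(|\gamma|-1)/2]}\, \|\tilde G\|_M \;\leq\; C_*\, \mu_*^{|\gamma|}\, \bar\mu^{t-t_k}\,\|G\|_M,
\end{equation*}
where $C_*$ absorbs at most a one-step offset in the alternation (its precise value depends on $\bar\mu$ but not on $\gamma$, $t$, or $G$). Finally, since $\cT$ is non-expansive on $\cH_M$---acting as the identity on constants and as a contraction on $\widehat\cH_M$---we have $\|\cT^{t_0}C_0\|_M \leq \|C_0\|_M$, which gives the claim.

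The main obstacle is the parity bookkeeping: because $C_k = \Phi_{\{0\}}\tilde G$ has parity opposite to $G$, the alternating sequence of ``$\bar\mu^{r}$'' and ``$(1+2\bar\mu^{r})$'' factors may begin with either type, unlike the pure-$\Psi_\gamma$ case of Lemma \ref{lemma1}, where the innermost $G_{\{t_k\}} = \Phi_{\{0\}}$ is always odd. Splitting $G$ into its even and odd parts and absorbing this single-step offset into the constant $C_*$ is where the argument requires the most care.
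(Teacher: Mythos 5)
Your proof is correct and follows essentially the same route as the paper's own argument: both iterate the even/odd recursion behind \eqref{even}--\eqref{odd} with the innermost seed replaced by $\cT^{t-t_{k}}G$ (contributing the factor $\bar\mu^{t-t_{k}}\|G\|_{M}$), and both decompose $G=G^{(+)}+G^{(-)}$ with $\|G\|_{M}=\|G^{(+)}\|_{M}+\|G^{(-)}\|_{M}$ to handle the one-step parity offset in the alternating product. The only point to tidy is your appeal to ``non-expansiveness'' of $\cT$ on $\cH_{M}$, which is not literally exact in the $\|\cdot\|_{M}$ norm since the splitting into constants plus $\widehat\cH_{M}$ is not isometric; however $\|\cT^{t_{0}}C_{0}\|_{M}\leq(1+2\bar\mu)\|C_{0}\|_{M}$ does hold and is absorbed into $C_{*}$, so nothing is lost.
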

 \begin{proof} \label {Proof.}  Proceeding as in the previous proof,  we see that if $G$ is odd and $|\gamma|>1$, we get, in analogy with \eqref{iterazione},
 \begin {equation} \label{odd1}  \left \| \left \langle \Psi_{\gamma}(\zeta)\;  G(\eta_{t}) \vert  \frak M_{0}\right \rangle  \right \|_{M} \leq  M^{|\gamma|} \| G\|_{M}\; \bar \mu^{t-t_{k}} \prod_{j \;even} \bar \mu^{r_{j}} \prod_{j\; odd} (1+ 2\bar \mu^{r_{j}}),  \end{equation}
 and if $G$ is even, by  an obvious   modification of the   proof,
  \begin {equation} \label{even1}  \left \| \left \langle \Psi_{\gamma}(\zeta)\;  G(\eta_{t}) \vert  \frak M_{0}\right \rangle  \right \|_{M} \leq  M^{|\gamma|} \| G\|_{M}\; \bar \mu^{t-t_{k}} \prod_{j \;odd} \bar \mu^{r_{j}} \prod_{j\; even} (1+ 2\bar \mu^{r_{j}}). \end{equation}
 Writing  $G(\eta) = G^{(+)}(\eta) + G^{(-)}(\eta)$, where $G^{(\pm)}(\eta) = {G(\eta) \pm G(-\eta)\over 2} \in \widehat \cH_{M}$, and observing that
 $\|G\|_{M} =  \|G^{(+)}\|_{M} + \|G^{(-)}\|_{M}$, we get the result \eqref{oddeven}.
 \end{proof}. \par \smallskip
 \noindent {\bf Proof of Inequality \ref{nnjeravv}.}
   We denote by $m_{\gamma}, M_{\gamma}$ the minimum and maximum of the set $\gamma\in \frak g$, $\gamma\neq \emptyset$, and if $\gamma = \{t_{0}, \ldots, t_{k}\}$, then $\gamma +t =\{t_{0}+t, \ldots, t_{k}+t\}$, $t\geq - t_{0}$.    \par\smallskip
    Using the Walsh expansion \eqref{somma1}    we have
     $f(S^{t}\widehat \zeta) = \sum_{\gamma} f_{\gamma} \widehat \Psi_{\gamma+ t}(\widehat \zeta)$, and we  write
   \begin{equation}\label{split} f(\widehat \zeta) f(S^{t}\widehat \zeta) = f_{\emptyset} f(S^{t}\widehat \zeta) +  C^{(1)}_{t}(\widehat \zeta) + C^{(2)}_{t}(\widehat \zeta) -  R_{t}(\widehat \zeta)\end{equation}
$$ C^{(1)}_{t}(\widehat \zeta) = \sum_{\gamma, \gamma\prime\neq \emptyset\atop M_{\gamma} < m_{\gamma\prime}+t} f_{\gamma} f_{\gamma\prime}  \Psi_{\gamma}(\widehat \zeta) \widehat \Psi_{\gamma\prime +t}(\widehat \zeta), \qquad C^{(2)}_{t}(\widehat \zeta) = \sum_{\gamma, \gamma\prime\neq \emptyset\atop M_{\gamma} \geq m_{\gamma\prime+t} }f_{\gamma} f_{\gamma\prime}   \Psi_{\gamma}(\widehat \zeta)   \Psi_{\gamma\prime +t} (\widehat \zeta) ,$$
and $R_{t}(\widehat \zeta) = \sum_{\gamma, \gamma\prime\neq \emptyset } f_{\gamma}\Psi_{\gamma}(\widehat \zeta) f_{\gamma\prime} \langle \Psi_{\gamma\prime}\rangle \chi(M_{\gamma} \geq m_{\gamma\prime+t})$, where $\chi$ is the indicator function. \par\smallskip
As $|\langle \Psi_{\gamma\prime}\rangle | \leq \|\langle \Psi_{\gamma\prime}\vert \frak  M_{0}\rangle \|_{M}$ we see, by \eqref{lemmadue} and \eqref{correz},  that
\begin{equation}\label{erre} \left \| \left \langle R_{t}(\cdot) \vert \frak M_{0}\right \rangle \right \|_{M} \leq \sum_{\gamma\prime\neq \emptyset} \left |f_{\gamma\prime} \langle \Psi_{\gamma\prime}\rangle \right |  \sum_{\gamma: M_{\gamma}\geq t} |f_{\gamma} \left \| \left \langle  \Psi_{\gamma}\vert \frak M_{0}\right \rangle \right \|_{M} \; \leq \; \cst \; \|\tilde f\|^{2}_{\cC^{\alpha}} \; \kappa^{t} .\end{equation}

Passing to $C^{(1)}_{t}$,
let $\gamma, \gamma\prime\in \frak g$ be such that $M_{\gamma} = r$,  $m_{\gamma\prime} = m$ and $r< m+t$. Taking the conditional expectation with respect to $\frak M_{0}^{r}$ we get by  Proposition \ref{proposition},
$$ \left | \left \langle   \Psi_{\gamma} \widehat \Psi_{\gamma\prime +t} \vert \frak M_{0}\right \rangle \right | = \left | \left \langle  \Psi_{\gamma}\left [\cT^{t+m-r} \widehat G_{\gamma\prime}\right ](\eta_{r}) \vert \frak M_{0}\right \rangle \right | \leq C_{*} \; \mu_{*}^{|\gamma|} \bar \mu^{t+m-r} \|\widehat G_{\gamma\prime} \|_{M}$$
where $\widehat G_{\gamma\prime}$ is defined in \eqref{decompo}. Therefore, again by Inequalities   \eqref{lemmadue} and \eqref{correz}, we see that
 $$ \left \| \left \langle C^{(1)}_{t}(\cdot) \vert \frak M_{0}\right \rangle \right \|_{M}  \leq \cst\; \|\tilde f\|_{\alpha}^{2}
 \sum_{m=0}^{\infty}  \sum_{r=0}^{t+m-1}  \sum_{k=0}^{\infty}\;\bar \mu^{t+m-r}   2^{-\alpha (r+m+k)}   A_{0}^{r} A_{m}^{m+k}  $$
 where, for $0\leq j\leq k\in \Z_{+}$ we set $ A_{j}^{k} : = \sum_{\gamma\in \frak g} \mu_{*}^{|\gamma|} \chi(m_{\gamma}=j, M_{\gamma} =k) \; < (1+\mu_{*})^{k-j+1} $.
 As $\bar \mu < \kappa = 2^{-\alpha}(1+\mu_{*}) <1$,  we get the estimate
  \begin{equation}\label{pezzo1} \left \| \left \langle C^{(1)}_{t}(\cdot) \vert \frak M_{0}\right \rangle \right  \|_{M} \leq \cst\; \|f\|_{\alpha}^{2} \sum_{m=0}^{\infty} 2^{-\alpha m} \sum_{r=0}^{t+m-1} \kappa^{r} \bar \mu^{t+m-r} \leq \cst\; \|f\|_{\alpha}^{2} \;\kappa^{t}. \end{equation}
  \par\smallskip

Turning to  $C^{(2)}_{t}$,   observe that $\Psi_{\gamma} \Psi_{\gamma\prime} = \Psi_{\gamma\Delta \gamma\prime}$, where $\gamma\Delta\gamma\prime = \gamma\setminus \gamma\prime \; \cup \; \gamma\prime \setminus \gamma$,  so that
\begin{equation} \label{quarta}  \left \| \left \langle   C^{(2)}_{t}(\cdot) \vert \frak M_{0}\right \rangle \right \|_{M} \leq \sum_{m, s, k=0}^{\infty}  \;  \sum_{\gamma: M_{\gamma} =t+m+s} |f_{\gamma}| \; \sum_{\gamma\prime:  m_{\gamma\prime} = m \atop  M_{{\gamma\prime}} = m+k} | f_{\gamma\prime}| \left | \left \langle \Psi_{\gamma \Delta \{\gamma\prime+t\}} \vert \frak M_{0} \right \rangle \right |_{M} . \end{equation}
Let $n=  \min \{s, k\}$, $N=\max\{s,k\}$,  $\Omega_{n} = \{m, \ldots, m+n\}$ and  $\gamma_{1}=\gamma \cap \{0, \ldots, m-1\}$, $\gamma_{11}= \gamma\cap \Omega_{n}$,  $\gamma_{12}= \gamma\prime\cap \Omega_{n}$, $\gamma_{2} = \gamma \cup \{\gamma\prime +t\} \cap \{t+n+1,\ldots, t+N\}$. If  $\bar \gamma = \gamma_{11} \Delta \gamma_{12}$  we have
$\gamma \Delta \{\gamma\prime +t\} = \gamma_{1}\cup \bar \gamma \cup \gamma_{2}$,  and the sets $\gamma_{1}, \bar \gamma, \gamma_{2}$ have no common elements.\par\smallskip
 It is not hard to see by induction that
$$ \sum_{\gamma_{11}, \gamma_{12}\subseteq \Omega_{n}} \mu_{*}^{|\gamma_{11}\Delta \gamma_{12}|} = (2(1+\mu_{*}))^{n+1},$$
so that the sum on the right of \eqref{quarta}, for fixed $m,s,k$, is bounded by
$$   \cst\; \|\tilde f\|_{\alpha}^{2} \; \kappa^{t+m} 2^{-\alpha m} \kappa^{N-n}  (2^{-2\alpha+1 } (1+\mu_{*}))^{n} .$$
If $\alpha > 1/2$ and $\bar \mu$ is so small that $2^{-2\alpha+1 } (1+\mu_{*})< 1$, all series converge and we get
\begin{equation}\label{pezzo2} \left \| \left \langle C^{(2)}_{t}(\cdot) \vert \frak M_{0}\right \rangle \|_{M} \right \|  \leq \cst\; \|\tilde f\|_{\alpha}^{2} \;\kappa^{t}. \end{equation}
\par\smallskip Finally, the inequality $|f_{\emptyset}| \| \langle f(S^{t}\cdot
\vert \frak M_{0}\rangle \|_{M} \leq \cst\;  \bar \mu^{t} \|\tilde f\|_{\cC^{\alpha}}^{2}$ is an immediate consequence of Theorem \ref{base} and Lemma \ref{lemma:hoelder}. The proof of \eqref{nnjeravv} follows from this estimate, together with the previous estimates \eqref{erre}, \eqref{pezzo1} and  \eqref{pezzo2}.

 \hfill \square

\subsection*{Acknowledgements}
 A.M. is supported  by the European social fund within the framework of realizing the project ``Support of inter-sectoral mobility and quality enhancement of research teams at Czech Technical University in Prague'' (CZ.1.07/2.3.00/30.0034).\\
 C.S. is supported by ERC Grant MAQD 240518.

\vskip 1.5 cm

\end{document}